\documentclass[11pt]{article}

\usepackage{fullpage}
\usepackage{soul}
\usepackage{url}
\usepackage[utf8]{inputenc}
\usepackage[small]{caption}
\usepackage{graphicx}
\usepackage{xcolor}
\usepackage{amsmath}
\usepackage{mathtools}
\mathtoolsset{showonlyrefs = true}
\usepackage{booktabs}
\usepackage{algorithm}
\usepackage{enumerate}
\usepackage{dsfont}
\usepackage[hidelinks]{hyperref}
\usepackage{cleveref}
\usepackage{comment}

\newcommand{\du}{\mathrm{disutil}}
\newcommand{\mechWF}{\varphi^{\mathrm{WF}}}
\newcommand{\mechQP}{\varphi^{\mathrm{QP}}}
\DeclareMathOperator{\EG}{EG}

\usepackage{amsthm}
\usepackage{amssymb}
\usepackage{bm}
\usepackage{physics}

\newtheorem{theorem}{Theorem}
\newtheorem{corollary}{Corollary}
\newtheorem{proposition}{Proposition}
\newtheorem{lemma}{Lemma}

\theoremstyle{definition}
\newtheorem{definition}{Definition}
\newtheorem{example}{Example}

\usepackage{natbib}
\usepackage{authblk}

\allowdisplaybreaks

\title{\bf Fractional Assignment with $\ell_1$ Preferences}

\author[1]{Yasushi Kawase}
\author[2]{Warut Suksompong}
\author[3]{Hanna Sumita}
\author[3]{Yu Yokoi}

\affil[1]{Chuo University, Japan}
\affil[2]{National University of Singapore, Singapore}
\affil[3]{Institute of Science Tokyo, Japan}

\date{\vspace{-10mm}}

\begin{document}

\maketitle

\begin{abstract}
We study a fractional assignment setting where $n$ objects are to be assigned to $n$ agents with unit capacity, and each agent specifies an ideal distribution over the objects.
Unlike in classic random assignment, these ideal distributions are not necessarily degenerate, as agents may prefer a mixture of objects rather than any single object.
We assume that agents seek to minimize the $\ell_1$ distance between their ideal distribution and the distribution they receive, which is equivalent to maximizing the overlap between the two distributions.
We propose two mechanisms, one based on water filling (WF) and the other on quadratic programming (QP), and show that both mechanisms are utilitarian-optimal (and hence Pareto efficient), envy-free, strategyproof, and satisfy equal treatment of equals.
Moreover, we highlight a distinct advantage of each mechanism: while the WF mechanism satisfies the stronger property of group-strategyproofness, the QP mechanism is more robust in terms of egalitarian overlap welfare.
\end{abstract}

\section{Introduction}
\label{sec:intro}

In the classic \emph{random assignment} problem, there are $n$ agents (e.g., students, workers, or tenants) and $n$ objects (e.g., school seats, job positions, or public houses), and the goal is to compute a random assignment of the objects to the agents based on their preferences.
Such an assignment is described by a doubly stochastic matrix, where each agent's probabilities of being assigned the $n$~objects sum to~$1$, and each object's probabilities of being assigned to the $n$~agents also sum to~$1$.
Agents are assumed to have preferences in the form of \emph{ordinal rankings} over the objects---thus, each agent's most preferred outcome is to be assigned her most preferred object with probability~$1$.
Since the pioneering work of \citet{BogomolnaiaMo01}, random assignment has been extensively studied in both economics and computer science \citep{AbdulkadirogluSo03,KattaSe06,AzizMaXi15,BrandlBrHo17,KawaseSuYo23,Duddy25}.

In this paper, we consider a variant of random assignment recently introduced by \citet{Ramezanian24}, where agents do not have ordinal preferences, and therefore do not necessarily prefer degenerate distributions.
Instead, each agent specifies her \emph{ideal distribution} (or \emph{target distribution}) over objects, which may be non-degenerate.
Such preferences can arise in several practical settings.
For instance, when the objects are divisible---such as time for using different facilities (e.g., gym, laboratory, and study space) or completing various tasks (e.g., research, teaching, and paper reviewing)---agents may derive greater benefit from splitting their time across multiple facilities or tasks rather than concentrating it all on a single option.
Another example is a national government allocating resources across domains such as education, healthcare, and transportation to its cities---it is likely that each city would prefer a diversified bundle spanning multiple resource types instead of receiving all of its resources in a single domain.
Due to these interpretations of distributions as (deterministic) assignments of divisible objects rather than as lotteries over indivisible objects, we shall refer to our problem as \emph{fractional assignment}.
Nevertheless, our problem can also be applied when indivisible objects are assigned repeatedly over time, such as household chores across days, as agents may prefer rotating over a mix of chores rather than handling the same chore every day. A similar motivation based on repeated matching scenarios was also discussed by \citet{Ramezanian24}.

In order to formally investigate our fractional assignment problem, we need to specify how agents evaluate possible outcomes other than their target distributions.
A natural approach, also taken by \citet{Ramezanian24}, is to consider an agent's disutility to be the ``distance'' between her target distribution and the distribution that she receives.
While several distance measures are known in the literature, a particularly well-studied one is the \emph{$\ell_1$ distance}, defined as the difference between the two distributions on each object, summed up across all objects.
Besides its simplicity, minimizing the $\ell_1$ distance is equivalent to maximizing the \emph{overlap} between the two distributions, i.e., the minimum between the two distributions on each object, summed up across all objects.\footnote{See, e.g., Lemma~2.6 of \citet{GoelKrSa19}.}
For these reasons, $\ell_1$ preferences are commonly assumed in work on \emph{budget aggregation}, a ``public-goods'' version of our problem where the agents' target distributions are aggregated into a single output distribution shared by all agents \citep{FreemanPePe21,CaragiannisChPr24,deBergFrSc24,BrandtGrSe26,CembranoFrSc26,ElkindGrLe26}.\footnote{See also the broader survey by \citet{SuksompongTe26}.}

With preferences defined based on $\ell_1$ distances, we can now formalize the notions that we desire in the fractional assignment setting.
Three fundamental desiderata in preference aggregation problems are efficiency, fairness, and strategyproofness.
An assignment is \emph{Pareto efficient} if no other assignment makes some agent better off and no agent worse off.
Fairness can be captured by \emph{envy-freeness}---an assignment is envy-free if no agent prefers another agent's assignment to her own---as well as by \emph{equal treatment of equals}, which stipulates that agents who have the same preference should receive the same outcome.
A mechanism is said to be \emph{strategyproof} if no agent can benefit by misreporting her target distribution, and Pareto efficient (resp., envy-free) if it always produces a Pareto efficient (resp., envy-free) assignment.
Does there exist a fractional assignment mechanism that is Pareto efficient, envy-free, strategyproof, and satisfies equal treatment of equals under $\ell_1$ preferences?

\subsection{Overview of Results}

We answer the question above in the affirmative.
In particular, we propose two natural mechanisms for fractional assignment with $\ell_1$ preferences: the \emph{water filling (WF)} mechanism and the \emph{quadratic programming (QP)} mechanism.
\begin{itemize}
\item The WF mechanism constructs the output allocation in an iterative manner.
In the first phase, it fills entries at the same speed, stopping an entry when either the corresponding cap is reached or the object is exhausted.
In the second phase, it continues this filling process without the cap condition until all agents are filled and all objects are exhausted.\footnote{The iterative filling nature bears a resemblance to the ``probabilistic serial'' mechanism in random assignment with ordinal preferences \citep{BogomolnaiaMo01}.}
\item By contrast, the QP mechanism operates in a single step, returning the ``minimum $\ell_2$-norm'' allocation (i.e., the allocation having the least sum of squared entries) within the set of utilitarian-optimal allocations.\footnote{A similar tie-breaking rule has been used in budget aggregation \citep{Lindner11,FreemanPePe21}.}
\end{itemize}
We prove that both mechanisms are utilitarian-optimal, envy-free, strategyproof, and satisfies equal treatment of equals.

Furthermore, we demonstrate that the two mechanisms have differing strengths.
On the one hand, the WF mechanism satisfies the group-strategyproofness guarantee, 
whereas the QP mechanism fails even weak group-strategyproofness. 
On the other hand, the QP mechanism is more robust in terms of ``egalitarian overlap welfare'' (i.e., the minimum overlap utility among all agents).
Specifically, while the egalitarian overlap welfare produced by the QP mechanism can exceed that of the WF mechanism by a factor of $\Omega(n)$, 
it can never fall short by more than a factor of $2$.

\subsection{Related Work}

The prior work most closely related to ours is the aforementioned work of \citet{Ramezanian24}, who studied an equivalent setting under the name ``division of chances''.
\citeauthor{Ramezanian24} proposed a class of mechanisms called \emph{uniform rule for dividing chances (URC)}, which are also Pareto efficient, envy-free, and strategyproof.
While the first phase of URC mechanisms works in a similar way to our WF mechanism, their second phase operates sequentially rather than simultaneously.
As a consequence, URC mechanisms violate equal treatment of equals, which WF satisfies.
\citeauthor{Ramezanian24} did not consider an optimization approach that our QP mechanism takes, nor the comparison based on egalitarian welfare.
In fact, our results answer three open questions posed by \citeauthor{Ramezanian24}---for more details, see Appendix~\ref{app:Ramezanian}.
Earlier work focused on the \emph{uniform rule} for dividing resources among agents with single-peaked preferences \citep{Sprumont91,MorimotoSeCh13}.
As discussed by \citet[p.~5]{Ramezanian24}, this method is unsuitable for the fractional assignment setting.

The allocation of multiple homogeneous divisible objects has also been studied by \citet{GhodsiZaHi11} and \citet{AzizHeLu25}.
\citet{GhodsiZaHi11} assumed that the agents' preferences are based on \emph{Leontief utilities} drawn from economics: if agent~$i$ needs $p_{ij}$ units of object~$j$ for each production unit and receives $x_{ij}$ units of object~$j$, then her utility is the minimum among the ratios $x_{ij}/p_{ij}$ among all~$j$.
These authors proposed a solution called ``dominant resource fairness'', which satisfies several desirable properties including envy-freeness, strategyproofness, and Pareto efficiency.
\citet{AzizHeLu25} considered a model where agents may have non-linear utilities within each object, but these utilities are additive across objects.
While their model is more general than ours, the generality makes strong guarantees difficult to obtain.
In particular, they showed that deciding the existence of an envy-free and Pareto efficient outcome is NP-complete, and no sublinear approximation of a fairness notion called ``maximin share fairness'' can be satisfied.

Further afield, the problem of allocating a divisible resource is often referred to as \emph{cake cutting}, where the cake is a metaphor for a \emph{heterogeneous} divisible resource such as land \citep{Procaccia16}.
Since the utilities in cake cutting can be complex and difficult to describe explicitly, a common method for an algorithm to access them is via queries \citep{RobertsonWe98}.
While different agents may generally receive different amounts of cake in this model, variants in which all agents must receive the same amount have also been studied \citep{JojicPaZi21}.

\section{Preliminaries}
\label{sec:model}

\subsection{Model}
\label{subsec:model}

A fractional assignment instance consists of a tuple $(N,M,(\bm{p}_i)_{i\in N})$, where $N$ is a set of $n$
agents and $M$ is a set of $n$ objects, for some positive integer $n\ge 2$.
We assume that there is one unit of each object.
Let $\Delta$ denote the set of unit distributions over the objects in $M$, i.e.,
$\Delta\coloneqq\{\bm{q}\in\mathbb R^M\mid \sum_{j\in M}q_j=1,\ q_j\ge 0\ (j\in M)\}$.
Each agent $i\in N$ is associated with a target distribution $\bm{p}_i\in\Delta$.

A (fractional) allocation $X=(x_{ij})_{i\in N,\,j\in M}$ is an $n \times n$
doubly stochastic matrix, that is, its entries are nonnegative, each row sums to $1$, and
each column sums to $1$.
We denote by $\bm{x}_i=(x_{ij})_{j\in M}$ the row for agent $i\in N$.
In an allocation $X$, agent $i\in N$ receives a distribution $\bm{x}_i\in\Delta$ of
objects, i.e., an $x_{ij}$ fraction of each object $j\in M$.
Let $\mathcal{B}$ denote the set of all fractional allocations.
Given an allocation $X$, each agent $i\in N$ incurs a disutility due to the difference between $\bm{x}_i$ and $\bm{p}_i$.
We define the \emph{disutility} of agent $i\in N$ from $\bm{q}\in\Delta$ as the
$\ell_1$ distance between $\bm{q}$ and $\bm{p}_i$, that is,
$\du(\bm{q},\bm{p}_i)\coloneqq \sum_{j\in M}|q_j-p_{ij}|$.

An allocation $X=(x_{ij})_{i\in N,\,j\in M}$ is called \emph{utilitarian-optimal} if it
minimizes the total disutility of the agents, $\sum_{i\in N}\du(\bm{x}_i,\bm{p}_i)$.
It is \emph{Pareto efficient} if there is no allocation $Y=(y_{ij})_{i\in N,\,j\in M}\in\mathcal{B}$ such that
$\du(\bm{y}_i,\bm{p}_i)\le\du(\bm{x}_i,\bm{p}_i)$ for every $i\in N$, with a strict inequality for at least one
agent.
Moreover, an allocation $X$ is \emph{envy-free} if
$\du(\bm{x}_i,\bm{p}_i)\le \du(\bm{x}_{i'},\bm{p}_i)$ for all $i,i'\in N$.

A mechanism is a map $\psi\colon\Delta^N\to\mathcal{B}$ from a target distribution
profile to an allocation. A mechanism $\psi$ is \emph{strategyproof} if, for every
target distribution profile $P=(\bm{p}_i)_{i\in N}\in\Delta^N$, there is no pair of agent $i\in N$ and report
$\bm{q}_i\in\Delta$ such that
$\du(\psi(\bm{q}_i,\bm{p}_{-i})_i,\bm{p}_i)<\du(\psi(P)_i,\bm{p}_i)$.
It is \emph{weakly group-strategyproof} if, for every profile $P\in\Delta^N$, every nonempty
coalition $S\subseteq N$, and every joint misreport $\bm{q}_S\in\Delta^S$, not all members of $S$
strictly improve, that is, it is not the case that
$\du(\psi(\bm{q}_S,\bm{p}_{-S})_i,\bm{p}_i)<\du(\psi(P)_i,\bm{p}_i)$ for every $i\in S$.
It is \emph{group-strategyproof} if, under the same quantification, it is not the case that
$\du(\psi(\bm{q}_S,\bm{p}_{-S})_i,\bm{p}_i)\le\du(\psi(P)_i,\bm{p}_i)$ for every $i\in S$, with a strict inequality
for at least one $i\in S$.
A mechanism is \emph{Pareto efficient} (resp., \emph{envy-free}) if the allocation that it returns is always Pareto efficient (resp., envy-free).
Finally, a mechanism~$\psi$ satisfies \emph{equal treatment of equals} if for every profile $P=(\bm{p}_i)_{i\in N}$ and every pair of agents $i,j\in N$ such that $\bm{p}_i = \bm{p}_j$, it holds that $\psi(P)_i = \psi(P)_j$.

\subsection{Basic Properties of $\ell_1$ Preferences}
\label{subsec:basic-properties}

Since both $\bm{q}$ and $\bm{p}_i$ lie in $\Delta$, minimizing the $\ell_1$ disutility is equivalent to maximizing
the \emph{overlap utility} 
\begin{align}
u_i(\bm{q},\bm{p}_i)\coloneqq \sum_{j\in M}\min\{q_j,p_{ij}\}
=1-\frac12\,\du(\bm{q},\bm{p}_i).
\end{align}

We classify each object $j\in M$ by how its total reported demand compares with its supply of $1$:
$M_=\coloneqq\{j\in M\mid \sum_{i\in N}p_{ij}=1\}$,
$M_>\coloneqq\{j\in M\mid \sum_{i\in N}p_{ij}>1\}$, and
$M_<\coloneqq\{j\in M\mid \sum_{i\in N}p_{ij}<1\}$.
Define $M_\ge \coloneqq M_=\cup M_>$ and $M_\le \coloneqq M_=\cup M_<$.

The following characterization of utilitarian-optimal allocations is the common starting point for mechanisms studied in this paper.

\begin{lemma}\label{lem:uo-criterion}
An allocation $(x_{ij})_{i\in N,\,j\in M}$ is utilitarian-optimal if and only if, for every
$i\in N$ and $j\in M$, it holds that $x_{ij}=p_{ij}$ when $j\in M_=$,
$x_{ij}\le p_{ij}$ when $j\in M_>$, and $x_{ij}\ge p_{ij}$ when $j\in M_<$.
\end{lemma}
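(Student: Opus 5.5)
The plan is to work with the overlap utility and prove a universal upper bound on total welfare that is attained exactly by the allocations described in the statement. For any allocation $X$, write $U(X)\coloneqq\sum_{i\in N}u_i(\bm{x}_i,\bm{p}_i)=\sum_{j\in M}\sum_{i\in N}\min\{x_{ij},p_{ij}\}$. Since $\du=2-2u_i$, minimizing total disutility is the same as maximizing $U$. For each object $j$, I bound the column contribution in two ways:
\[
\sum_{i}\min\{x_{ij},p_{ij}\}\le \sum_i x_{ij}=1,
\qquad
\sum_{i}\min\{x_{ij},p_{ij}\}\le \sum_i p_{ij}.
\]
This gives $U(X)\le \sum_{j\in M}\min\{1,\sum_i p_{ij}\}\eqqcolon U^*$ for every $X\in\mathcal{B}$.

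The next step is to characterize equality column by column. The first inequality is tight exactly when $x_{ij}\le p_{ij}$ for all $i$. The second is tight exactly when $x_{ij}\ge p_{ij}$ for all $i$. Hence column $j$ contributes $\min\{1,\sum_i p_{ij}\}$ in precisely the following cases:
\begin{itemize}
\item for $j\in M_>$, when $x_{ij}\le p_{ij}$ for all $i$;
\item for $j\in M_<$, when $x_{ij}\ge p_{ij}$ for all $i$;
\item for $j\in M_=$, when either condition holds. Each condition, combined with $\sum_i x_{ij}=1=\sum_i p_{ij}$, forces $x_{ij}=p_{ij}$ for all $i$.
\end{itemize}
So $U(X)=U^*$ if and only if $X$ satisfies the conditions of the lemma.

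The step I expect to need the most care is showing that $U^*$ is actually attained by some doubly stochastic matrix. Without this, ``utilitarian-optimal'' (maximizing $U$ over $\mathcal{B}$) would not coincide with ``$U=U^*$.'' I would construct such an $X$ explicitly.
\begin{itemize}
\item For $j\in M_=$, set $x_{ij}=p_{ij}$.
\item For $j\in M_>$, scale down: $x_{ij}=p_{ij}/\sum_{i'}p_{i'j}$.
\item For $j\in M_<$, start from $x_{ij}=p_{ij}$ and distribute the slack $s_j\coloneqq 1-\sum_i p_{ij}$ among the agents.
\end{itemize}
After the first two steps, each row sum is at most $1$, because the entries for $M_>$ were only decreased. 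Let $d_i\ge0$ denote the deficit of row $i$. The total deficit is $n-\sum_{j\in M_=}1-|M_>|-\sum_{j\in M_<}\sum_i p_{ij}=\sum_{j\in M_<}s_j$, so it equals the total slack. Any transportation plan, for instance $d_i s_j/\sum_k s_k$, then fills the rows while adding only nonnegative amounts to the $M_<$ columns. When $M_<=\emptyset$, all deficits are zero and nothing needs to be filled.

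The resulting matrix is doubly stochastic and satisfies the conditions, so $\max_{\mathcal{B}}U=U^*$. Combining this with the equality characterization proves both directions of the lemma.
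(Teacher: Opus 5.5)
Your proof is correct and is essentially the paper's argument written in terms of overlap utility: the column bound $\sum_{i}\min\{x_{ij},p_{ij}\}\le\min\{1,\sum_i p_{ij}\}$, translated through $\du=2-2u_i$, is exactly the paper's per-column lower bound $\bigl|\sum_i p_{ij}-1\bigr|$ on disutility, and your equality analysis is the same. Your attaining matrix is also the paper's construction: scale down the $M_>$ columns, then distribute each row's deficit $d_i$ (the paper's $s_i$) proportionally over the slacks $1-\sum_i p_{ij}$ of the $M_<$ columns.
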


\begin{proof}
For any allocation $(x_{ij})_{i\in N,\,j\in M}$, the total disutility is given by
\begin{align}
\sum_{i\in N}\du(\bm{x}_i,\bm{p}_i)
&=\sum_{i\in N}\sum_{j\in M}|x_{ij}-p_{ij}|
=\sum_{j\in M}\sum_{i\in N}|x_{ij}-p_{ij}| \notag\\
&\ge \sum_{j\in M_>}\sum_{i\in N}(p_{ij}-x_{ij})
+\sum_{j\in M_<}\sum_{i\in N}(x_{ij}-p_{ij}) \label{eq:UO}\\
&=\sum_{j\in M_>}\left(\sum_{i\in N}p_{ij}-1\right)
+\sum_{j\in M_<}\left(1-\sum_{i\in N}p_{ij}\right). \notag
\end{align}
Equality in~\eqref{eq:UO} holds if and only if, for every $i\in N$, 
(i) $x_{ij}\le p_{ij}$ for each $j\in M_>$, 
(ii) $x_{ij}=p_{ij}$ for each $j\in M_=$, and
(iii) $x_{ij}\ge p_{ij}$ for each $j\in M_<$.

It remains to show that the lower bound is achievable. 
Define $c_j\coloneqq \sum_{i\in N}p_{ij}$ for each $j\in M$.
Then, 
(i) $c_j>1$ if $j\in M_>$, 
(ii) $c_j=1$ if $j\in M_=$, and
(iii) $c_j<1$ if $j\in M_<$.
Let $T\coloneqq \sum_{j\in M_>}(c_j-1) = \sum_{j\in M_<}(1-c_j) \ge 0$, and
$s_i\coloneqq \sum_{j\in M_>}(1-1/c_j)p_{ij}$ for each $i\in N$. 
Then,
$$\sum_{i\in N}s_i
= \sum_{i\in N}\sum_{j\in M_>}\left(1 - \frac{1}{c_j}\right)p_{ij}
= \sum_{j\in M_>}\sum_{i\in N}\left(\frac{c_j - 1}{c_j}\right)p_{ij}
= \sum_{j\in M_>}(c_j-1)=T.$$
If $T=0$, then $M_>=M_<=\emptyset$, and hence $(x_{ij})_{i\in N,\,j\in M}=(p_{ij})_{i\in N,\,j\in M}$ satisfies the condition.
Suppose now that $T>0$.
We define $x_{ij}$ for each $i\in N$ and $j\in M$ by
\begin{align*}
x_{ij}=
\begin{cases}
p_{ij}/c_j          & \text{if }j\in M_>,\\
p_{ij}              & \text{if }j\in M_=,\\
p_{ij}+(1-c_j)s_i/T & \text{if }j\in M_<.
\end{cases}
\end{align*}
Clearly, $x_{ij} \ge 0$.
For each $i\in N$, we have $\sum_{j\in M}x_{ij}=\sum_{j\in M}p_{ij}-\sum_{j\in M_>}(1-1/c_j)p_{ij}+\sum_{j\in M_<}(1-c_j)s_i/T=1-s_i+s_i=1$.
Moreover, the columns of $X$ have the following properties.
\begin{itemize}
\item For each $j\in M_>$, we have $x_{ij}\le p_{ij}$ and $\sum_{i\in N}x_{ij} = \sum_{i\in N}p_{ij}/c_j=c_j/c_j=1$.
\item For each $j\in M_=$, we have $x_{ij}=p_{ij}$ and $\sum_{i\in N}x_{ij} = \sum_{i\in N}p_{ij}=1$.
\item For each $j\in M_<$, we have $x_{ij}\ge p_{ij}$ and $\sum_{i\in N}x_{ij} = \sum_{i\in N}(p_{ij}+(1-c_j)s_i/T)=c_j+(1-c_j)\sum_{i\in N}s_i/T=c_j+(1-c_j)=1$.
\end{itemize}
Hence, the lower bound is attained, and the lemma follows.
\end{proof}

For each profile $P\in\Delta^N$, let $\mathcal{U}(P)$ denote the set of utilitarian-optimal allocations with respect to $P$.
By \Cref{lem:uo-criterion}, we have
\begin{align}
\mathcal{U}(P)=\left\{X\in\mathcal{B}\;\middle|\; \begin{array}{ll} 
x_{ij}\le p_{ij} & (i\in N,\ j\in M_\ge),\\
x_{ij}\ge p_{ij} & (i\in N,\ j\in M_\le)
\end{array}\right\}.
\label{eq:mathcalU}
\end{align}

\begin{example}\label{ex:small_instance}
Let $n=3$, $N = M = \{1,2,3\}$, $\bm{p}_1=(1,0,0)$, and $\bm{p}_2=\bm{p}_3=(1/2,1/2,0)$.
Then, the total reported demands of the three objects are $2$, $1$, and $0$, respectively.
Hence, $M_>=\{1\}$, $M_= = \{2\}$, and $M_<=\{3\}$.
By \Cref{lem:uo-criterion}, the set of utilitarian-optimal allocations is
\begin{align}
\mathcal{U}(P)=
\left\{
\begin{pmatrix}
1-a-b & 0   & a+b\\
a     & 1/2 & 1/2-a\\
b     & 1/2 & 1/2-b
\end{pmatrix}
\,\middle|\, 
\begin{array}{l}
0\le a\le 1/2,\\    
0\le b\le 1/2\\
\end{array}
\right\}.\label{eq:ex_utilopt}
\end{align}
\end{example}

\subsection{The WF and QP Mechanisms}
\label{subsec:wf-qp-mechanisms}

We study two natural mechanisms for selecting a desirable allocation from $\mathcal U(P)$.
The first mechanism is based on a water-filling procedure, while the second uses a quadratic tie-breaking criterion over the utilitarian-optimal set of allocations.

\begin{definition}
\label{def:wf-mechanism}
The \emph{water filling (WF) mechanism}, denoted by $\mechWF$, is defined as
follows.
Fix a profile $P\in\Delta^N$. Initialize $x_{ij}\coloneqq 0$ for every
$(i,j)\in N\times M$.

\textit{Phase~1 (fill up to caps).}
Increase all entries $(i,j)\in N\times M$ simultaneously at a rate of one unit per unit time, and stop increasing $x_{ij}$ as soon
as at least one of the following two events occurs: $x_{ij}=p_{ij}$ or $\sum_{r\in N}x_{rj}=1$.
Continue until no further increase is possible without violating one of these conditions, and let
$Y=(y_{ij})_{i\in N,\,j\in M}$ denote the resulting matrix.

\textit{Phase~2 (complete the matrix).}
Starting from $X\coloneqq Y$, increase all entries $(i,j)\in N\times M$ such that 
$\sum_{k\in M}x_{ik}<1$ and $\sum_{r\in N}x_{rj}<1$
at the same speed, and stop increasing $x_{ij}$ as soon as
either $\sum_{k\in M}x_{ik}=1$ or $\sum_{r\in N}x_{rj}=1$ holds.
Continue until $X$ is doubly stochastic, and set $\mechWF(P)\coloneqq X$.
\end{definition}

To understand the motivation behind the WF mechanism, note that the overlap representation of $\ell_1$~preferences suggests interpreting $p_{ij}$ as a per-object cap: agent $i$ derives
one unit of utility from each unit of object $j$ up to $p_{ij}$, and derives zero marginal utility
beyond $p_{ij}$.
Observe also that the matrix after Phase~1 has a column-wise closed form. 
In particular, for $j\in M_\le$, we have $y_{ij}=p_{ij}$ for every $i\in N$. 
For $j\in M_>$, let $t_j$ be the unique value satisfying $\sum_{i\in N}\min\{p_{ij},t_j\}=1$; then, $y_{ij}=\min\{p_{ij},t_j\}$ for every $i\in N$.

\begin{example}\label{ex:WF}
Consider applying the WF mechanism to the profile $P$ defined in \Cref{ex:small_instance}.
At the end of Phase~1, it outputs $Y$ such that $\bm{y}_1=(1/3,0,0)$ and $\bm{y}_2=\bm{y}_3=(1/3,1/2,0)$.
In Phase~2, it fills object $3$ and outputs
\begin{align}
\mechWF(P)=\begin{pmatrix}
1/3&0  &2/3\\
1/3&1/2&1/6\\
1/3&1/2&1/6
\end{pmatrix}.
\end{align}
\end{example}

\begin{definition}
\label{def:qp-mechanism}
The \emph{quadratic programming (QP) mechanism}, denoted by $\mechQP$, selects the
minimum $\ell_2$-norm allocation within the set of utilitarian-optimal allocations:
\begin{align}
\mechQP(P)\coloneqq
\operatorname*{arg\,min}_{X\in\mathcal{U}(P)}
\frac12\sum_{i\in N}\sum_{j\in M}x_{ij}^2. \label{eq:QP}
\end{align}
Note that the factor $1/2$ does not affect the mechanism and is included only for convenience in later calculations.
Since $\mathcal{U}(P)$ is nonempty, closed, and convex, and the objective is strictly convex, the minimum $\ell_2$-norm point is unique and the mechanism is well-defined.
\end{definition}

\begin{example}\label{ex:QP}
Consider applying the QP mechanism to the profile $P$ defined in \Cref{ex:small_instance}.
The symmetry between agents $2$ and $3$ implies $a=b$ in \eqref{eq:ex_utilopt}.
Writing $a=b=z$, the $\ell_2$-norm objective is $(1-2z)^2/2+z^2/2+z^2/2+1/8+1/8+(2z)^2/2+(1/2-z)^2/2+(1/2-z)^2/2=6(z-1/4)^2+5/8$. 
Hence, it is minimized at $z=1/4$, and since this yields a feasible allocation, we obtain
\begin{align}
\mechQP(P)=\begin{pmatrix}
1/2&0  &1/2\\
1/4&1/2&1/4\\
1/4&1/2&1/4
\end{pmatrix}.
\end{align}
\end{example}

Both mechanisms can be implemented in polynomial time. 
For the WF mechanism, Phase~1 can be computed column by column by sorting the $n$ caps in each column.
In Phase~2, we can determine the entry $(i,j)$ that the process would stop first; since each step fills at least one row or column to its capacity of~$1$, this phase requires at most $2n$ steps. 
For the QP mechanism, since \eqref{eq:QP} is a strictly convex separable quadratic minimization problem over the polytope \eqref{eq:mathcalU}, it can be solved by standard polynomial-time convex optimization methods. 

We observe that, since the descriptions of WF and QP do not distinguish between agents with the same preference and the output of each mechanism is unique, both mechanisms satisfy equal treatment of equals.

\begin{proposition}
\label{prop:equal-treatment}
The mechanisms $\mechWF$ and $\mechQP$ satisfy equal treatment of equals.
\end{proposition}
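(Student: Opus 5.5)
The plan is a symmetry argument. Both mechanisms are anonymous: their definitions commute with permuting agents. Combined with uniqueness of the output, this forces agents with identical targets to receive identical rows.

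Fix a profile $P$ and agents $i\neq i'$ with $\bm{p}_i=\bm{p}_{i'}$. Let $\sigma$ be the transposition of $i$ and $i'$ on $N$. For a matrix $X$, let $X^\sigma$ denote $X$ with rows $i$ and $i'$ swapped. Since $\bm{p}_i=\bm{p}_{i'}$, we have $P^\sigma=P$.

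For QP, the argument has three steps. First, the column sums $\sum_{r}p_{rj}$ are invariant under $\sigma$, so the classes $M_>,M_=,M_<$ are unchanged. By \eqref{eq:mathcalU}, $X\in\mathcal U(P)$ if and only if $X^\sigma\in\mathcal U(P)$; the row and column sum constraints are preserved, and the entrywise bounds for rows $i$ and $i'$ coincide. Second, the objective $\frac12\sum x_{rj}^2$ is invariant under row swaps. So if $X=\mechQP(P)$, then $X^\sigma$ is also a minimizer. Third, by the uniqueness noted after Definition~\ref{def:qp-mechanism}, $X^\sigma=X$, that is, $\bm{x}_i=\bm{x}_{i'}$.

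For WF, I will argue by induction along the process that $x_{ij}(t)=x_{i'j}(t)$ for all $j$ and all times $t$.

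In Phase~1, all entries start at $0$ and grow at unit rate. The stopping conditions for $(i,j)$ and $(i',j)$ are $x_{ij}=p_{ij}$ and $x_{i'j}=p_{i'j}$, which have the same cap, together with the shared column condition. Hence the two entries are identical at every moment, and $\bm{y}_i=\bm{y}_{i'}$. The closed form $y_{ij}=\min\{p_{ij},t_j\}$ gives this directly.

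In Phase~2, the set of active entries is determined by the row sums and column sums. If rows $i$ and $i'$ are equal at some moment, they have equal row sums, so $(i,j)$ is active exactly when $(i',j)$ is. Both active entries then increase at the same speed, and the rows remain equal.

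The only delicate point is that the process is well defined as a deterministic trajectory. I would formalize this as piecewise-linear evolution between finitely many events; at most $2n$ steps occur, as stated in the text. The invariant "$X^\sigma=X$" holds at the start of each segment and is preserved during it, because the rate vector is itself $\sigma$-invariant. Hence $\mechWF(P)^\sigma=\mechWF(P)$.

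I expect no real obstacle. The main care is in stating the WF invariance cleanly, and in checking in the QP case that $\mathcal U(P)$ is exactly $\sigma$-invariant, which relies on the explicit description~\eqref{eq:mathcalU}.
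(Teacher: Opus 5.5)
Your proposal is correct and takes essentially the same approach as the paper. The paper just observes that neither mechanism's description distinguishes agents with identical reports and that each output is unique. Your argument makes this explicit: for QP, via $\sigma$-invariance of $\mathcal{U}(P)$ and of the objective together with uniqueness of the minimizer; for WF, via an invariant maintained along the deterministic piecewise-linear filling trajectory.
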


\section{Properties of the WF Mechanism}
\label{sec:wf-mechanism}

We begin by analyzing properties of the WF mechanism $\mechWF$ in this section.
First, the following proposition formalizes the fact that Phase~2 never changes the overlap utilities of the agents.
During the execution of the mechanism, we say that a row~$i$ (resp., a column~$j$) is ``full'' if $\sum_{k\in M}x_{ik} = 1$ (resp., $\sum_{r\in N}x_{rj} = 1$).

\begin{proposition}
\label{prop:wf-min-equals-phase1}
Let $P\in\Delta^N$, and let $Y$ and $\mechWF(P)$ be as in \Cref{def:wf-mechanism}.
Then, it holds that $\min\{\mechWF(P)_{ij},p_{ij}\}=y_{ij}$ for every $(i,j)\in N\times M$.
In particular,  $u_i(\mechWF(P)_i,\bm{p}_i)=\sum_{j\in M}y_{ij}$ for every $i\in N$.
\end{proposition}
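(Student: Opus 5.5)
We have to show that Phase~2 only raises entries $x_{ij}$ for which $y_{ij}=p_{ij}$ already holds; then $\min\{x_{ij},p_{ij}\}=y_{ij}$ for all entries. The approach is a column-by-column case analysis based on the closed form of $Y$ stated after \Cref{def:wf-mechanism}. The final "in particular" statement then follows by summing over $j$, since $u_i(\bm{x}_i,\bm{p}_i)=\sum_{j}\min\{x_{ij},p_{ij}\}$.

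First, consider a column $j\in M_\ge$. If $j\in M_=$, then $y_{ij}=p_{ij}$ and $\sum_r y_{rj}=1$. If $j\in M_>$, then $y_{ij}=\min\{p_{ij},t_j\}$ with $\sum_r y_{rj}=1$. In both cases column $j$ is already full at the end of Phase~1. Phase~2 never increases an entry in a full column, and entries are never decreased. Hence $\mechWF(P)_{ij}=y_{ij}\le p_{ij}$, which gives $\min\{\mechWF(P)_{ij},p_{ij}\}=y_{ij}$.

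Second, consider a column $j\in M_<$. Here $y_{ij}=p_{ij}$ for every $i$. Phase~2 can only increase entries, so $\mechWF(P)_{ij}\ge p_{ij}$ and therefore $\min\{\mechWF(P)_{ij},p_{ij}\}=p_{ij}=y_{ij}$.

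The only step needing care is confirming the closed form of $Y$, in particular that columns in $M_\ge$ are exactly full after Phase~1 and that columns in $M_<$ reach all their caps. I would argue this directly from the dynamics, since all entries of a column rise at the common rate $t$ until they hit a cap.
\begin{itemize}
\item For $j\in M_<$, the column sum at time $t$ is $\sum_i\min\{p_{ij},t\}\le\sum_i p_{ij}<1$. The column-exhaustion event therefore never fires, and every entry stops at its cap $p_{ij}$.
\item For $j\in M_\ge$, the map $t\mapsto\sum_i\min\{p_{ij},t\}$ is continuous and nondecreasing, runs from $0$ to $\sum_i p_{ij}\ge1$, and first reaches $1$ at $t_j$. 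At that moment all still-active entries of the column stop.
\end{itemize}
Phase~1 may continue in other columns after column $j$ stops, but columns evolve independently in Phase~1. So each column's final state depends only on that column, which justifies the column-wise analysis. With this in place the case analysis above completes the proof.
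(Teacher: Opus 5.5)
Your proof is correct and rests on the same observation as the paper's: Phase~2 only raises entries in columns that Phase~1 left unfilled, and in such columns every entry has already stopped at its cap, so $y_{ij}=p_{ij}$ there. The paper phrases this entry-wise (if $x_{ij}$ grows in Phase~2, its column was not full, so the entry was stopped by its cap), which avoids having to verify the closed form of $Y$ and the $M_\ge$/$M_<$ classification; the content of the two arguments is the same.
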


\begin{proof}
Fix $(i,j)\in N\times M$.
Since the amounts filled in Phase~1 never exceed the caps, we have $y_{ij} \le p_{ij}$.
If $\mechWF(P)_{ij}=y_{ij}$, then $\mechWF(P)_{ij}\le p_{ij}$, and so
$\min\{\mechWF(P)_{ij},p_{ij}\}=y_{ij}$.
Otherwise, $\mechWF(P)_{ij}>y_{ij}$, which means that the $(i,j)$ entry is increased in Phase~2.
By the definition of Phase~2, at the start of this phase we have 
$\sum_{k\in M}y_{ik}<1$ and
$\sum_{r\in N}y_{rj}<1$.
Since column $j$ is not full at the start of Phase~2, the full-column event of column $j$ has not occurred in Phase~1.
Hence, the entry $(i,j)$ must have been stopped by the cap, that is, $y_{ij}=p_{ij}$.
Therefore, $\min\{\mechWF(P)_{ij},p_{ij}\}=p_{ij}=y_{ij}$.

Fixing $i\in N$ and summing over $j\in M$ gives the second claim.
\end{proof}

We now establish the main properties of the WF mechanism.
\begin{theorem}\label{thm:wf-props}
The mechanism $\mechWF$ is utilitarian-optimal, envy-free, and group-strategyproof.
\end{theorem}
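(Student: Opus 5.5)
The plan rests on \Cref{prop:wf-min-equals-phase1}, which reduces every welfare comparison to the Phase~1 matrix $Y$. That matrix has the closed form $y_{ij}=\min\{p_{ij},t_j\}$, with the convention $t_j=+\infty$ for $j\in M_\le$. Throughout, write $c_j=\sum_{i\in N}p_{ij}$.

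\textbf{Well-definedness and utilitarian optimality.} First I will check that Phase~2 terminates with a doubly stochastic matrix. At every moment the total row deficit $\sum_i(1-\sum_k x_{ik})$ equals the total column deficit, since both equal $n-\sum_{i,j}x_{ij}$. Hence if some row is not full, some column is not full, and their common entry can still be increased. Next I will verify the criterion of \Cref{lem:uo-criterion} directly.
\begin{itemize}
\item For $j\in M_>$, column $j$ becomes full in Phase~1, so it is untouched in Phase~2 and $x_{ij}=y_{ij}\le p_{ij}$.
\item For $j\in M_=$, we have $y_{ij}=p_{ij}$ and the column is already full, so $x_{ij}=p_{ij}$.
\item For $j\in M_<$, we have $y_{ij}=p_{ij}$, and Phase~2 only increases entries, so $x_{ij}\ge p_{ij}$.
\end{itemize}
Thus $\mechWF(P)\in\mathcal U(P)$.

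\textbf{Envy-freeness.} Fix agents $i$ and $i'$ and compare $u_i(\bm x_i,\bm p_i)=\sum_j y_{ij}$ with $u_i(\bm x_{i'},\bm p_i)=\sum_j\min\{x_{i'j},p_{ij}\}$ column by column.
\begin{itemize}
\item If $j\in M_\le$, then $\min\{x_{i'j},p_{ij}\}\le p_{ij}=y_{ij}$.
\item If $j\in M_>$, column $j$ is frozen after Phase~1, so $x_{i'j}=y_{i'j}\le t_j$. Therefore $\min\{x_{i'j},p_{ij}\}\le\min\{t_j,p_{ij}\}=y_{ij}$.
\end{itemize}
Summing over $j$ gives the envy-freeness inequality.

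\textbf{Group-strategyproofness (main obstacle).} Fix a coalition $S$ with joint misreport $\bm q_S$, giving profile $P'$ and outcome $X'=\mechWF(P')$, with levels $t'_j$. By the envy-freeness computation, agent $i$ can gain in column $j$, meaning $\min\{x'_{ij},p_{ij}\}>y_{ij}$, only if $j\in M_>(P)$, $p_{ij}>t_j$ and $x'_{ij}>t_j$. Call such columns \emph{gain columns}. The plan is to argue by contradiction: suppose every member of $S$ weakly gains and some member strictly gains. Among all gain columns, pick one, $j^*$, with the smallest truthful level $t_{j^*}$.

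Since column $j^*$ is exactly exhausted at level $t_{j^*}$ under $P$, giving some agent more than $t_{j^*}$ forces some other agent $k$ to receive $x'_{kj^*}<y_{kj^*}$. I will then show that such an agent $k$ must lie in $S$. For an agent $k\notin S$ with true cap $p_{kj^*}$, the procedure on $P'$ gives at least $\min\{p_{kj^*},t'_{j^*}\}$. A monotonicity argument for the water level (if someone receives more than $t_{j^*}$, then $t'_{j^*}>t_{j^*}$, or column $j^*$ is not over-demanded under $P'$) then gives $x'_{kj^*}\ge y_{kj^*}$. So the loss lands on some $k\in S$.

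The loss of $k$ must be compensated for $k$ to weakly gain. This requires a gain column $j'$ for $k$. I need to show such a column leads to a contradiction with the minimality of $t_{j^*}$, or else set up a flow and counting argument. The key observation I would try first is this: an agent $k\in S$ loses at $j^*$ only by reporting $q_{kj^*}<y_{kj^*}\le t_{j^*}$, and the mass $k$ thereby frees up can be regained only in columns whose truthful levels lie below what $k$ already receives there.

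Failing a clean exchange argument, I would fall back on a global potential. Compare $\sum_{i\in S}u_i$ under $X'$ and under $\mechWF(P)$. Bound the total gain of $S$ in gain columns by the total loss of $S$ in those same columns, using the column sums and the fact that outsiders never lose. Then show that equality forces every gain to be zero, which contradicts the assumed strict gain. Making this exchange accounting rigorous in the presence of Phase~2 refilling is the step I expect to be hardest.
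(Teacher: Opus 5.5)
Your proofs of utilitarian optimality and envy-freeness are correct and essentially match the paper's; the Phase~2 termination check is a useful addition. Group-strategyproofness, however, is not proved. You outline two routes and complete neither.

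The primary route does not close as stated. Minimality of $t_{j^*}$ only tells you that the column $j'$ compensating the losing member $k$ satisfies $t_{j'}\ge t_{j^*}$, which contradicts nothing. Chasing compensations from one coalition member to the next gives no contradiction unless you compare amounts. Your fallback names the right quantity but leaves the decisive inequality unproved. The Phase~2 refilling you flag as the hardest step is exactly what must be handled.

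The missing step is the column-wise bound $U_j(X')\le U_j(X)$ for every $j\in M$, where $U_j(Z)\coloneqq\sum_{i\in S}\min\{z_{ij},p_{ij}\}$ and $X=\mechWF(P)$. This is how the paper proves group-strategyproofness, and it follows in a few lines from facts you already have.
\begin{itemize}
\item If $j$ is not a gain column, the bound holds term by term.
\item If $j$ is a gain column, then $j\in M_>(P)$. So column $j$ of $X$ equals column $j$ of $Y$, and $U_j(X)=\sum_{i\in S}x_{ij}$.
\item On a gain column, every outsider $k\notin S$ satisfies $x'_{kj}\ge x_{kj}$. Either column $j$ is not full after Phase~1 under $P'$, in which case $x'_{kj}\ge q_{kj}=p_{kj}\ge x_{kj}$ because Phase~2 only increases entries (this is all you need about refilling). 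Or it is full at a water level $t'_j$ with $x'_{rj}=\min\{q_{rj},t'_j\}$ for all $r$. Then $t'_j>t_j$, because some member of $S$ receives more than $t_j$, so $x'_{kj}=\min\{p_{kj},t'_j\}\ge\min\{p_{kj},t_j\}=x_{kj}$.
\item The column sums now give $U_j(X')\le\sum_{i\in S}x'_{ij}=1-\sum_{k\notin S}x'_{kj}\le 1-\sum_{k\notin S}x_{kj}=U_j(X)$.
\end{itemize}
Summing over $j$ shows that the coalition's total overlap cannot increase. That is incompatible with every member weakly gaining and some member strictly gaining.

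Note that \emph{outsiders never lose} holds only on gain columns. If $S$ inflates its reports on $j$ so that $t'_j<t_j$, outsiders do lose there, but then no member of $S$ gains on $j$. The paper organizes the same computation by cases: whether column $j$ is full after Phase~1 under the misreport, and if so, how $t$ compares with $t'$. This makes the minimal-level selection unnecessary.
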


The remainder of this section is devoted to proving each property in turn: utilitarian-optimality in \Cref{prop:wf-uo}, envy-freeness in \Cref{prop:wf-ef}, and group-strategyproofness in \Cref{prop:wf-gsp}.

\begin{proposition}\label{prop:wf-uo}
The mechanism $\mechWF$ is utilitarian-optimal.
\end{proposition}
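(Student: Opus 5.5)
We verify the criterion of \Cref{lem:uo-criterion} for $X\coloneqq\mechWF(P)$. Before that we must check that the output is a well-defined allocation, i.e., that Phase~2 really terminates with a doubly stochastic matrix. Phase~2 keeps increasing every entry whose row and column are both non-full. So if some row is non-full, the total mass is below $n$, and hence some column is also non-full. The entry at that row and column is then still active. Each stopping event fills at least one row or column, so the process ends after at most $2n$ events. When it ends, every row and every column sums to $1$, and $X$ is doubly stochastic.

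The main step uses the closed form of the Phase~1 matrix $Y$ together with \Cref{prop:wf-min-equals-phase1}, which gives $\min\{x_{ij},p_{ij}\}=y_{ij}$. We treat the three classes of objects separately.

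\emph{Case $j\in M_>$.} Here $y_{ij}=\min\{p_{ij},t_j\}$ and $\sum_i y_{ij}=1$. So column $j$ is already full after Phase~1, and no entry of it is increased in Phase~2. Hence $x_{ij}=y_{ij}\le p_{ij}$.

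\emph{Case $j\in M_\le$.} Here $y_{ij}=p_{ij}$ for all $i$. The Phase~1 caps are all reached before the column total reaches $1$; when $j\in M_=$ both events happen together, which is consistent. Since entries only increase, $x_{ij}\ge y_{ij}=p_{ij}$.

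\emph{Case $j\in M_=$ (upper bound).} Column $j$ sums to $\sum_i p_{ij}=1$ at the end of Phase~1, so it is full and receives nothing in Phase~2. Thus $x_{ij}=p_{ij}$.

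These three cases are exactly the conditions of \Cref{lem:uo-criterion}, so $X$ is utilitarian-optimal.

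The only step needing care is justifying the closed form of $Y$, which the paper only sketches. In column $j$, all entries rise at a common level $t$, each frozen at its cap $p_{ij}$, until the column sum $\sum_i\min\{p_{ij},t\}$ reaches $1$ or every cap is hit. This sum is continuous and nondecreasing in $t$ and equals $\sum_i p_{ij}$ once $t\ge\max_i p_{ij}$. It therefore reaches $1$ exactly when $j\in M_\ge$, and the columns evolve independently of one another in Phase~1. I expect the main obstacle to be simply stating the Phase~1 and Phase~2 dynamics rigorously, in particular termination of Phase~2. Once these are in place, the case analysis is routine.
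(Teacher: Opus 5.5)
Your proof is correct and follows the paper's argument. Columns in $M_\ge$ are full after Phase~1 and untouched in Phase~2, which gives $x_{ij}\le p_{ij}$; columns in $M_\le$ reach all caps in Phase~1 and only increase afterwards, which gives $x_{ij}\ge p_{ij}$; so \Cref{lem:uo-criterion} applies. Your extra justification of the Phase~1 closed form and of Phase~2 termination is sound and fills in details the paper leaves implicit, though the appeal to \Cref{prop:wf-min-equals-phase1} is never actually used.
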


\begin{proof}
Fix $P\in\Delta^N$, and let $X\coloneqq \mechWF(P)$.
We show that $X$ satisfies the characterization in \Cref{lem:uo-criterion}.
For $j\in M_\ge$, column $j$ becomes full in Phase~1, and Phase~2 never changes full columns.
Thus, $x_{ij}\le p_{ij}$ for every $i\in N$.
For $j\in M_\le$, column $j$ cannot become full before every cap $p_{ij}$ is reached, which means that $y_{ij}=p_{ij}$ for every $i\in N$, and Phase~2 only increases entries.
Thus, $x_{ij}\ge p_{ij}$ for every $i\in N$.
Therefore, \Cref{lem:uo-criterion} implies that $X$ is utilitarian-optimal.
\end{proof}

\begin{proposition}\label{prop:wf-ef}
The mechanism $\mechWF$ is envy-free.
\end{proposition}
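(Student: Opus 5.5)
To prove that $\mechWF$ is envy-free, I will work in the overlap formulation. Fix $P$, let $X\coloneqq\mechWF(P)$, and let $Y$ be the matrix after Phase~1. Envy-freeness is equivalent to
\[
u_i(\bm{x}_i,\bm{p}_i)\ge u_i(\bm{x}_{i'},\bm{p}_i)
\]
for all $i,i'\in N$. By \Cref{prop:wf-min-equals-phase1}, the left-hand side equals $\sum_{j\in M} y_{ij}$, so it suffices to show that for every object $j$,
\[
\min\{x_{i'j},p_{ij}\}\le y_{ij}.
\]
Summing this inequality over $j$ gives the claim. I will prove it column by column, using the closed form of Phase~1 given after \Cref{def:wf-mechanism}.

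\textbf{Case $j\in M_\le$.} Here $y_{ij}=p_{ij}$, so the inequality $\min\{x_{i'j},p_{ij}\}\le p_{ij}=y_{ij}$ holds trivially.

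\textbf{Case $j\in M_>$.} Here $y_{ij}=\min\{p_{ij},t_j\}$, where $t_j$ is the water level of column $j$. Column $j$ is full at the end of Phase~1. Phase~2 only increases entries in columns that are not full, so $x_{i'j}=y_{i'j}=\min\{p_{i'j},t_j\}\le t_j$. Hence
\[
\min\{x_{i'j},p_{ij}\}\le\min\{t_j,p_{ij}\}=y_{ij}.
\]

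The essential point is the over-demanded columns: every agent's share there is capped by the common level $t_j$, and Phase~2 never touches a full column. I do not expect a real obstacle. The one delicate step is confirming that $x_{i'j}\le t_j$ survives Phase~2, and this follows because Phase~2 raises only entries whose column is not yet full.
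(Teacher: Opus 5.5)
Your proof is correct and follows essentially the paper's argument. Both establish the object-wise domination $\min\{x_{i'j},p_{ij}\}\le\min\{x_{ij},p_{ij}\}$, using that a column filled in Phase~1 is untouched in Phase~2 and that every entry in it is bounded by the common water level. The differences are cosmetic: you split on $j\in M_\le$ versus $j\in M_>$ and use the closed form $y_{ij}=\min\{p_{ij},t_j\}$, whereas the paper splits on whether $x_{ij}\ge p_{ij}$ and argues directly that an entry stopped below its cap must have been stopped because its column became full.
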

\begin{proof}
Fix $P\in\Delta^N$, let $X\coloneqq \mechWF(P)$, and fix any agents $i,i'\in N$.
It suffices to show that
$u_i(\bm{x}_i,\bm{p}_i)\ge u_i(\bm{x}_{i'},\bm{p}_i)$.
We will show the object-wise domination
\begin{align}
\min\{x_{ij},p_{ij}\}\ge \min\{x_{i'j},p_{ij}\}\qquad (\forall j\in M). \label{eq:object-domination}
\end{align}

Fix $j\in M$.
If $x_{ij}\ge p_{ij}$, then $\min\{x_{ij},p_{ij}\}=p_{ij}\ge \min\{x_{i'j},p_{ij}\}$.
Suppose now that $x_{ij}<p_{ij}$.
Let $Y$ be the matrix at the end of Phase~1.
By \Cref{prop:wf-min-equals-phase1}, we have $y_{ij}=x_{ij}<p_{ij}$.
This implies that the increase of entry $(i,j)$ in Phase~1 stops because column $j$ becomes full.
Thus, $y_{rj}\le y_{ij}$ for every $r\in N$.
Moreover, Phase~2 never changes column $j$, so $x_{rj} = y_{rj}$ for every $r\in N$.
Hence, we have $\min\{x_{i'j},p_{ij}\} \le x_{i'j}=y_{i'j}\le y_{ij}=x_{ij}=\min\{x_{ij},p_{ij}\}$.
Therefore, \eqref{eq:object-domination} always holds.
Summing~\eqref{eq:object-domination} over $j\in M$ gives $u_i(\bm{x}_i,\bm{p}_i)\ge u_i(\bm{x}_{i'},\bm{p}_i)$, as desired.
\end{proof}

\begin{proposition}\label{prop:wf-gsp}
The mechanism $\mechWF$ is group-strategyproof.
\end{proposition}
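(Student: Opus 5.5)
The plan is an exchange argument carried out object by object, using the closed form of the Phase~1 matrix together with \Cref{prop:wf-min-equals-phase1}. Fix a true profile $P$, a coalition $S$, and a joint misreport $\bm{q}_S$. Write $Q\coloneqq(\bm{q}_S,\bm{p}_{-S})$, $X\coloneqq\mechWF(P)$, and $X'\coloneqq\mechWF(Q)$. Suppose, for contradiction, that every $i\in S$ weakly gains and some member strictly gains. For each object $j$ and agent $i$, define the true gain on object $j$ as
\[
g_{ij}\coloneqq\min\{x'_{ij},p_{ij}\}-\min\{x_{ij},p_{ij}\}.
\]
By \Cref{prop:wf-min-equals-phase1}, $\min\{x_{ij},p_{ij}\}=y_{ij}$, where $y_{ij}=p_{ij}$ for $j\in M_\le(P)$ and $y_{ij}=\min\{p_{ij},t_j\}$ for $j\in M_>(P)$. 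The goal is to show that $\sum_{i\in S}g_{ij}\le 0$ for every $j$. Summing over $j$ then gives $\sum_{i\in S}\bigl(u_i(\bm{x}'_i,\bm{p}_i)-u_i(\bm{x}_i,\bm{p}_i)\bigr)\le 0$, which contradicts the assumption that all members weakly gain and one strictly gains.

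\textbf{Case $j\in M_\le(P)$.} Here every agent already has $\min\{x_{ij},p_{ij}\}=p_{ij}$, so $g_{ij}\le 0$ for all $i$, and the claim is immediate.

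\textbf{Case $j\in M_>(P)$.} First, if no $i\in S$ has $g_{ij}>0$, the claim holds trivially. Otherwise, pick $i\in S$ with $g_{ij}>0$; this forces $x'_{ij}>t_j$ and $p_{ij}>t_j$.

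The key sub-claim is that then no agent $r\notin S$ loses on $j$, i.e., $g_{rj}\ge 0$ for all $r\notin S$. Suppose instead that $r\notin S$ has $\min\{x'_{rj},p_{rj}\}<\min\{p_{rj},t_j\}$. Then $x'_{rj}<p_{rj}$, and $p_{rj}$ is also $r$'s reported cap under $Q$ since $r$ is truthful. By \Cref{prop:wf-min-equals-phase1} applied to $Q$, $r$'s entry in column $j$ was stopped during Phase~1 of the run on $Q$ at a value strictly below its cap. That can only happen because column $j$ became full in Phase~1, at the common water level $t'_j=x'_{rj}<t_j$. Since entries rise at equal speed in Phase~1, every entry of column $j$ in $Y'$ is at most $t'_j$. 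Since Phase~2 never touches a full column, every entry of column $j$ in $X'$ is at most $t'_j$ as well. This contradicts $x'_{ij}>t_j>t'_j$.

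Granting the sub-claim, we have $\sum_{r\notin S}g_{rj}\ge 0$. Under any allocation, the total true overlap on column $j$ is at most $\sum_r x'_{rj}=1$. Under $X$ it equals $\sum_r y_{rj}=1$, because column $j\in M_>(P)$ is exactly exhausted in Phase~1. Hence $\sum_{r\in N}g_{rj}\le 0$, and therefore $\sum_{i\in S}g_{ij}\le 0$.

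The main obstacle is the sub-claim in the second case, namely that a truthful outsider cannot be pushed below its original share on a column where some coalition member rises above the original level $t_j$. Making it rigorous requires pinning down exactly when an entry of $\mechWF(Q)$ can end below its reported cap; \Cref{prop:wf-min-equals-phase1} together with the equal-speed structure of Phase~1 supplies this. A secondary point is that the argument is deliberately aggregate over $S$: individual members may lose on some objects and gain on others, and only the per-column sums are controlled.
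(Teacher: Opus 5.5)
Your proof is correct and takes essentially the same route as the paper: both bound the coalition's total true overlap column by column, the $M_\le(P)$ case is trivial, and the $M_>(P)$ case combines the Phase~1 water levels with conservation of the column total. The only difference is how the cases are organized. You split on whether some coalition member gains on column $j$, then argue by contradiction that no truthful outsider loses, since an outsider's loss forces $t'_j<t_j$ and caps the whole column at $t'_j$. The paper instead splits on whether column $j$ is full after Phase~1 under $Q$, and on $t$ versus $t'$.
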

\begin{proof}
Fix a profile $P\in\Delta^N$, a nonempty coalition $S\subseteq N$, and a joint misreport $\bm{q}_S\in\Delta^S$.
Write $Q\coloneqq(\bm{q}_S,\bm{p}_{-S})$.
Let $X\coloneqq\mechWF(P)$ and $X'\coloneqq\mechWF(Q)$.
For each object $j\in M$, define the coalition's overlap on $j$ by
$U_j(X)\coloneqq \sum_{i\in S}\min\{x_{ij},p_{ij}\}$ and 
$U_j(X')\coloneqq \sum_{i\in S}\min\{x'_{ij},p_{ij}\}$.
We will show that
$U_j(X')\le U_j(X)$ for all $j\in M$.

For $j\in M_\le$, we have $x_{ij}\ge p_{ij}$ for every $i\in N$ by the same reasoning as in \Cref{prop:wf-uo}.
Hence, $U_j(X)=\sum_{i\in S}p_{ij}$.
Since $\min\{x'_{ij},p_{ij}\}\le p_{ij}$, we obtain $U_j(X')\le U_j(X)$.

Now, let $j\in M_>$.
We analyze the outcome on column $j$ under the reported profile $Q$.

\medskip

\underline{Case 1}: Column $j$ is not full after Phase~1 under $Q$.
Let $Y'$ be the matrix obtained after Phase~1 under $Q$.
In this case, the full-column event never occurs on this column in Phase~1.
This implies that every entry on column $j$ reaches its cap by the construction.
Hence, $y'_{ij}=q_{ij}$ for every $i\in N$.
Moreover, since Phase~2 only increases entries, $x'_{ij}\ge y'_{ij}=q_{ij}=p_{ij}$ holds for every outside agent $i\in N\setminus S$.
Since $\sum_{i\in N}x'_{ij}=1$, we obtain
$U_j(X')
  \le \sum_{i\in S}x'_{ij}
  = 1-\sum_{i\in N\setminus S}x'_{ij}
  \le 1-\sum_{i\in N\setminus S}p_{ij}$.
On the other hand, since $j\in M_>$, column $j$ is full after Phase~1 under the truthful profile
$P$, and therefore $x_{ij}\le p_{ij}$ for every $i\in N$.
Thus,
$U_j(X)
  =\sum_{i\in S}x_{ij}
  =1-\sum_{i\in N\setminus S}x_{ij}
  \ge 1-\sum_{i\in N\setminus S}p_{ij}$.
Combining the two inequalities yields $U_j(X')\le U_j(X)$.

\medskip

\underline{Case 2}: Column $j$ is full after Phase~1 under $Q$.
In this case, Phase~2 never changes column~$j$.
Let $t$ and $t'$ be the times at which column~$j$ becomes full in Phase~1 under $P$ and $Q$, respectively.
Since Phase~1 is column-wise capped water filling, the equal-speed filling implies that $x_{ij}=\min\{p_{ij},t\}$ for all $i\in N$, $\sum_{i\in N}\min\{p_{ij},t\}=1$, and $U_j(X)=\sum_{i\in S}\min\{p_{ij},t\} = \sum_{i\in S}x_{ij}$.
Similarly, we have $x'_{ij}=\min\{q_{ij},t'\}$ for all $i\in N$, and $\sum_{i\in N}\min\{q_{ij},t'\}=1$.
Hence, $U_j(X')=\sum_{i\in S}\min\{p_{ij},q_{ij},t'\}$.

If $t\ge t'$, then
\begin{align}
U_j(X')
=\sum_{i\in S}\min\{p_{ij},q_{ij},t'\}
\le\sum_{i\in S}\min\{p_{ij},t'\}
\le\sum_{i\in S}\min\{p_{ij},t\}
=U_j(X).
\end{align}
Otherwise (i.e., $t<t'$), we have
\begin{align}
U_j(X')
&\le \sum_{i\in S}x'_{ij}
= 1-\sum_{i\in N\setminus S}x'_{ij}
= 1-\sum_{i\in N\setminus S}\min\{p_{ij},t'\}\\
&\le 1-\sum_{i\in N\setminus S}\min\{p_{ij},t\}
=1-\sum_{i\in N\setminus S}x_{ij}
=\sum_{i\in S}x_{ij}
=U_j(X).
\end{align}
Therefore, $U_j(X')\le U_j(X)$ holds regardless of the relation between $t$ and $t'$.

\medskip

We have shown that $U_j(X')\le U_j(X)$ for every $j\in M$. 
Since $\sum_{i\in S}u_i(\bm{x}_i,\bm{p}_i)=\sum_{j\in M}U_j(X)$ and $\sum_{i\in S}u_i(\bm{x}'_i,\bm{p}_i)=\sum_{j\in M}U_j(X')$, we have
$\sum_{i\in S}u_i(\bm{x}'_i,\bm{p}_i)\le \sum_{i\in S}u_i(\bm{x}_i,\bm{p}_i)$.
Hence, it cannot happen that every member of $S$ weakly improves in $X'$ compared to $X$ and at least one member strictly improves.
This implies that $\mechWF$ is group-strategyproof.
\end{proof}

\section{Properties of the QP Mechanism}
\label{sec:qp-mechanism}

In this section, we turn our attention to the QP mechanism and establish the following properties.
\begin{theorem}\label{thm:qp-props}
The mechanism $\mechQP$ is utilitarian-optimal, envy-free, and strategyproof.
\end{theorem}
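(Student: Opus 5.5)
Utilitarian-optimality holds by definition: $\mechQP(P)$ is chosen from $\mathcal U(P)$. The other two properties rest on a structural description of $X\coloneqq\mechQP(P)$ from the KKT conditions of \eqref{eq:QP}. All constraints in \eqref{eq:mathcalU} are linear, so KKT multipliers exist. There are multipliers $\alpha_i$ for the row constraints and $\beta_j$ for the column constraints, and each entry is the projection of $\alpha_i+\beta_j$ onto its box:
\begin{align}
x_{ij}=\min\{p_{ij},\max\{0,\alpha_i+\beta_j\}\}\ (j\in M_>),\quad x_{ij}=p_{ij}\ (j\in M_=),\quad x_{ij}=\max\{p_{ij},\alpha_i+\beta_j\}\ (j\in M_<).
\end{align}
By \Cref{lem:uo-criterion}, agent $i$'s overlap is $u_i=\sum_{j\in M_>}x_{ij}+\sum_{j\in M_\le}p_{ij}$. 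Equivalently, $1-u_i=\sum_{j\in M_<}(x_{ij}-p_{ij})$, which is the ``excess'' agent $i$ receives beyond her caps.

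\textbf{Envy-freeness.} Fix $i,i'$ and split on the sign of $\alpha_i-\alpha_{i'}$.
\begin{itemize}
\item If $\alpha_i\ge\alpha_{i'}$, I show object-wise domination as in \eqref{eq:object-domination}. For $j\in M_\le$ it is trivial, since $\min\{x_{ij},p_{ij}\}=p_{ij}$. For $j\in M_>$ we have $x_{i'j}\le\max\{0,\alpha_{i'}+\beta_j\}\le\max\{0,\alpha_i+\beta_j\}$. Hence $\min\{x_{i'j},p_{ij}\}\le\min\{p_{ij},\max\{0,\alpha_i+\beta_j\}\}=x_{ij}$.
\item If $\alpha_i<\alpha_{i'}$, I compare excesses instead. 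Since $u_i(\bm q,\bm p_i)=1-\sum_j(q_j-p_{ij})^+$ for any $\bm q\in\Delta$, it suffices to show $\sum_j(x_{i'j}-p_{ij})^+\ge\sum_{j\in M_<}(x_{ij}-p_{ij})$. For $j\in M_<$ we have $x_{ij}-p_{ij}=(\alpha_i+\beta_j-p_{ij})^+\le(\alpha_{i'}+\beta_j-p_{ij})^+\le(x_{i'j}-p_{ij})^+$, using $x_{i'j}\ge\alpha_{i'}+\beta_j$. Summing over $M_<$ and dropping the nonnegative terms for other $j$ finishes this case.
\end{itemize}

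\textbf{Strategyproofness.} Fix $i$, a misreport $\bm q_i$, $Q=(\bm q_i,\bm p_{-i})$, and $X'=\mechQP(Q)$. The goal is $\sum_j\min\{x'_{ij},p_{ij}\}\le u_i$.

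First I reduce to misreports that keep agent $i$'s excess structure. Columns in $M_\le(P)$ already give $i$ her full cap, so no gain is possible there. Any gain must come from columns $j\in M_>(P)$ where $x_{ij}<p_{ij}$, that is, where $x_{ij}=\max\{0,\alpha_i+\beta_j\}$. Next I use the variational characterization of projection. $X$ is the Euclidean projection of $0$ onto $\mathcal U(P)$, so $\langle X,Z-X\rangle\ge0$ for all $Z\in\mathcal U(P)$; the analogous inequality holds for $X'$ and $\mathcal U(Q)$. The two feasible sets differ only in row $i$'s bounds and in which columns are over- or under-demanded. Agent $i$'s report moves the column sums $c_j$, and $p_{rj}$ is fixed for $r\ne i$. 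For each column $j$, the other agents' allocations respond monotonically: raising $q_{ij}$ can only tighten their caps in $M_>$, because a larger $c_j$ makes more of the column bounded above. I would aim to show two things. First, a column $j$ with $x_{ij}<p_{ij}$ satisfies $\sum_{r\ne i}x'_{rj}\ge\sum_{r\ne i}x_{rj}$ whenever $x'_{ij}>x_{ij}$. Second, the resulting changes are compensated elsewhere in row $i$ through the multiplier $\alpha_i$. I would do this by building intermediate points $Z\in\mathcal U(P)$ and $Z'\in\mathcal U(Q)$, swapping row $i$ and the affected columns between $X$ and $X'$, and adding the two variational inequalities to obtain $\|X-X'\|^2\le$ (terms controlled by agent $i$'s overlap gain). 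That forces the gain to be nonpositive.

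This last step is the main obstacle. The QP solution is not column-wise closed-form like WF: a misreport shifts the multipliers $\beta_j$ globally, and the paper notes that QP is not even weakly group-strategyproof, so a coalition-level aggregate argument like that for \Cref{prop:wf-gsp} cannot work. If the variational approach stalls, my fallback is a local, parametric argument. Move $\bm q_i$ toward $\bm p_i$ along a segment. On each piece where the active set is fixed, $X'$ is affine in $\bm q_i$. Then show that the derivative of $\sum_j\min\{x'_{ij},p_{ij}\}$ in the direction of truthfulness is nonnegative, using the KKT formula for row $i$ and the fact that the other rows satisfy identical clamping rules with the same $\beta_j$.
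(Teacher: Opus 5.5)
Your utilitarian-optimality and envy-freeness arguments are correct and essentially match the paper's. You use the KKT representation (the extra $\max\{0,\cdot\}$ is harmless, since nonnegativity turns out to be redundant) and the split on $\alpha_i\ge\alpha_{i'}$ versus $\alpha_i<\alpha_{i'}$. You phrase this as overlap domination and an excess comparison rather than disutility differences.

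The gap is strategyproofness: the proposal contains no proof of it. The first intermediate claim of your primary route is false. Every column of $X$ and $X'$ sums to $1$, so $x'_{ij}>x_{ij}$ forces $\sum_{r\ne i}x'_{rj}<\sum_{r\ne i}x_{rj}$. Your claim therefore amounts to $x'_{ij}\le x_{ij}$ on every column with $x_{ij}<p_{ij}$, and that per-object monotonicity fails. In \Cref{ex:small_instance}, agent $2$ has $x_{21}=1/4<p_{21}=1/2$. If she reports $(1,0,0)$, QP returns rows $(3/8,1/4,3/8)$, $(3/8,1/4,3/8)$, $(1/4,1/2,1/4)$, certified by $\alpha=(0,0,-1/8)$ and $\beta=(3/8,1/4,3/8)$. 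So $x'_{21}=3/8>1/4$, even though her true overlap falls from $3/4$ to $5/8$. Only the aggregate over row $i$ is controlled. You never specify the cross points $Z\in\mathcal U(P)$, $Z'\in\mathcal U(Q)$ or the ``compensation through $\alpha_i$'', so this route gives no such aggregate bound. It is also unclear how to build them when columns switch between over- and under-demanded.

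Your fallback (straight segment from $\bm p_i$, pieces with fixed active set) is the paper's framework, but it stops before the key idea. The missing step is to characterise the velocity $Y=\mathrm{d}X(\tau)/\mathrm{d}\tau$ on a piece, by differentiating the KKT system with the active set frozen. $Y$ is then the \emph{minimum-$\ell_2$-norm} matrix with zero row and column sums, subject to:
\begin{itemize}
\item $y_{ij}=d_j$ on agent $i$'s tight entries;
\item $y_{rj}=0$ on the other agents' tight entries.
\end{itemize}
Suppose the slope of $\du(\bm x_i(\tau),\bm p_i)$ were negative. A circulation decomposition of $Y$ then yields a cycle that enters $i$ from some $j^+$ with $x_{ij^+}>p_{ij^+}$ and leaves to some $j^-$ with $x_{ij^-}<p_{ij^-}$. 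Because the path starts at $\bm p_i$, every frozen entry of row $i$ satisfies $x_{ij}-p_{ij}=\tau d_j$. Its velocity $d_j$ therefore has the same sign as $x_{ij}-p_{ij}$, which is the opposite of the cycle's orientation on $(i,j^\pm)$. Hence neither $(i,j^+)$ nor $(i,j^-)$ is frozen, and the whole cycle avoids frozen entries. Subtracting a small multiple of the cycle from $Y$ keeps it feasible and strictly lowers $\|Y\|$, a contradiction.

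Appealing to ``the KKT formula for row $i$'' with the same $\beta_j$ cannot replace this, because the multipliers themselves move along the path. You would also need to justify that $X(\tau)$ is continuous and piecewise affine, including where a column's classification flips, so that the interval slopes telescope to $f(1)-f(0)$.
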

Utilitarian-optimality follows directly from the definition of the mechanism. 
We will show envy-freeness in \Cref{prop:qp-ef} and strategyproofness in \Cref{prop:qp-sp}.
Note that unlike the WF mechanism, $\mechQP$ does not even satisfy weak group-strategyproofness (see Appendix~\ref{sec:weak-gsp}).

\subsection{Optimality Conditions}
\label{subsec:qp-optimality-conditions}

By \Cref{lem:uo-criterion}, an equivalent definition of $\mechQP$ is that, for any $P\in\Delta^N$, the output $\mechQP(P)$ is the unique optimal solution to the following convex quadratic
optimization problem:
 \begin{alignat}{2}
({\rm QP})\quad
\min \quad & \textstyle\frac{1}{2}\sum_{i\in N}\sum_{j\in M}x^2_{ij} \\
\text{s.t.}\quad & \textstyle\sum_{j\in M}x_{ij}=1  &\quad& (i\in N),\\
                 & \textstyle\sum_{i\in N}x_{ij}=1  && (j\in M),\\
                 & x_{ij}\le p_{ij}                 && (i\in N,\ j\in M_\ge),\\
                 & x_{ij}\ge p_{ij}                && (i\in N,\ j\in M_\le).
\end{alignat}
As we show in \Cref{cor:nonneg} later, the optimizer always has coordinates in $[0,1]$, so
nonnegativity constraints are redundant.
Consider the following Lagrangian function for {\rm(QP)}:
\begin{align*}
\frac{1}{2}\sum_{i\in N}\sum_{j\in M}x_{ij}^2
&+ \sum_{i\in N}\alpha_i\!\left(1-\sum_{j\in M}x_{ij}\right)
 + \sum_{j\in M}\beta_j\!\left(1-\sum_{i\in N}x_{ij}\right) \notag\\
&+ \sum_{i\in N}\sum_{j\in M_{\ge}}\lambda_{ij}(x_{ij}-p_{ij})
 + \sum_{i\in N}\sum_{j\in M_{\le}}\mu_{ij}(p_{ij}-x_{ij}).
\end{align*}
The primal variable is $X=(x_{ij})_{i\in N,\,j\in M}$.
The multipliers are $(\alpha_i)_{i\in N}$, $(\beta_j)_{j\in M}$,
$(\lambda_{ij})_{(i,j)\in N\times M}$, and $(\mu_{ij})_{(i,j)\in N\times M}$.
The latter multipliers are defined on all of $N\times M$, and we later set
$\lambda_{ij}=0$ for $j\in M_<$ and $\mu_{ij}=0$ for $j\in M_>$ via condition (h).
The \emph{Karush--Kuhn--Tucker (KKT)} conditions for the problem are:
\begin{itemize}
\setlength{\leftskip}{3mm}
    \item[(a)~] $x_{ij}-\alpha_i-\beta_j+\lambda_{ij}-\mu_{ij}=0$
    for $i\in N$ and $j\in M$,
    \item[(b)~] $\sum_{j\in M}x_{ij}=1$ for $i\in N$,
    \item[(c)~] $\sum_{i\in N}x_{ij}=1$ for $j\in M$,
    \item[(d)~] $x_{ij}\le p_{ij}$ for $i\in N$ and $j\in M_\ge$,
    \item[(e)~] $x_{ij}\ge p_{ij}$ for $i\in N$ and $j\in M_\le$,
\item[(f)~] $\lambda_{ij}\ge 0$ and $\lambda_{ij}(x_{ij}-p_{ij})=0$ for $i\in N$ and $j\in M_\ge$,
\item[(g)~] $\mu_{ij}\ge 0$ and $\mu_{ij}(p_{ij}-x_{ij})=0$ for $i\in N$ and $j\in M_\le$,
\item[(h)~] $\lambda_{ij}=0$ for $j\notin M_\ge$ and $\mu_{ij}=0$ for $j\notin M_\le$.
\end{itemize}
In particular, for $j\in M_=$, both inequality families are present, and therefore both
multipliers $(\lambda_{ij},\mu_{ij})$ enter the same stationarity equation (a).
Accordingly, stationarity is written in the unified form
$x_{ij}-\alpha_i-\beta_j+\lambda_{ij}-\mu_{ij}=0$ over all $(i,j) \in N\times M$.

The KKT conditions give the following coordinatewise representation of the QP allocation, which is
the main tool for the rest of this section.
\begin{proposition}\label{prop:opt-rep}
There exist constants $(\alpha^*_i)_{i\in N}$ and $(\beta^*_j)_{j\in M}$ such that the unique optimal solution $(x^*_{ij})_{i\in N,\,j\in M}$ to problem (QP) can be represented as
\begin{align}
    x^*_{ij}&=\begin{cases}
        p_{ij}           & \text{if }j\in M_=,\\
        \min\{\alpha^*_i+\beta^*_j,\,p_{ij}\} & \text{if }j\in M_>,\\
        \max\{\alpha^*_i+\beta^*_j,\,p_{ij}\} & \text{if }j\in M_<.
    \end{cases}
    \label{eq:opt-sol}
\end{align}
\end{proposition}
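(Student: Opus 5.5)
Since (QP) is a convex quadratic program with linear constraints only, Slater-type qualification is unnecessary: linear constraints plus feasibility (nonempty $\mathcal U(P)$ by \Cref{lem:uo-criterion}) guarantee that KKT multipliers exist at the optimum. So the plan is to take the unique optimizer $X^*$, invoke existence of multipliers $(\alpha^*_i)$, $(\beta^*_j)$, $(\lambda_{ij})$, $(\mu_{ij})$ satisfying (a)--(h), and read off the representation \eqref{eq:opt-sol} coordinatewise from stationarity (a) combined with complementary slackness (f),(g) and the sign pattern (h). One subtle point is that nonnegativity $x_{ij}\ge 0$ is not included in (QP) as written. I would either argue that dropping it does not change the optimizer, or simply work with (QP) exactly as stated; the text says nonnegativity is redundant by \Cref{cor:nonneg}. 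I take (QP) as the stated problem, whose unique solution is $\mechQP(P)$ up to that later corollary.

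For $j\in M_=$, constraints (d) and (e) together force $x^*_{ij}=p_{ij}$, so the first case is immediate regardless of the multipliers.

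For $j\in M_>$, condition (h) gives $\mu_{ij}=0$, so (a) reads $x^*_{ij}=\alpha^*_i+\beta^*_j-\lambda_{ij}$ with $\lambda_{ij}\ge 0$. If $\lambda_{ij}>0$, then (f) gives $x^*_{ij}=p_{ij}$, and $p_{ij}=x^*_{ij}\le \alpha^*_i+\beta^*_j$, so $x^*_{ij}=\min\{\alpha^*_i+\beta^*_j,p_{ij}\}$. If $\lambda_{ij}=0$, then $x^*_{ij}=\alpha^*_i+\beta^*_j$ and (d) gives $x^*_{ij}\le p_{ij}$, so again $x^*_{ij}$ equals the minimum.

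The case $j\in M_<$ is symmetric: $\lambda_{ij}=0$ and $x^*_{ij}=\alpha^*_i+\beta^*_j+\mu_{ij}$ with $\mu_{ij}\ge0$. Complementary slackness (g) together with (e) gives $x^*_{ij}=\max\{\alpha^*_i+\beta^*_j,p_{ij}\}$.

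The only step requiring care is the existence of multipliers. I would justify it by citing the standard fact that for convex problems with affine constraints, every optimum admits KKT multipliers (linearity constraint qualification); equivalently, one can use LP duality or Farkas' lemma applied to the linearization at $X^*$. Uniqueness of $X^*$ follows from strict convexity, as noted after \Cref{def:qp-mechanism}. The multipliers themselves need not be unique, but the statement only asserts that suitable constants exist.
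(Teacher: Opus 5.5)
Your proposal is correct and matches the paper's proof: the paper also obtains multipliers for (QP) from standard convex KKT theory and reads off the three cases from stationarity (a), complementary slackness (f) and (g), and the sign pattern (h). Your remarks that affine constraints plus feasibility suffice for the multipliers to exist, and that (QP) is treated as stated without nonnegativity constraints, are accurate; they spell out what the paper covers with its citation of KKT theory.
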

\begin{proof}
By standard convex-programming KKT theory, the optimal solution
$(x^*_{ij})_{i\in N,\,j\in M}$ satisfies the KKT conditions (a)--(h); see, for example, the book by \citet{BoydVa04}.
Let $(\alpha^*_i)_{i\in N}$ and $(\beta^*_j)_{j\in M}$ be the corresponding multipliers from the KKT conditions.
Fix any $i\in N$ and $j\in M$.
If $j\in M_=$, conditions (d) and (e) immediately imply $x^*_{ij}=p_{ij}$.

In case $j\in M_>$, we consider two subcases.
If $\lambda_{ij}=0$, then $x^*_{ij}=\alpha^*_i+\beta^*_j$ by (a), and
$p_{ij}\ge x^*_{ij}=\alpha^*_i+\beta^*_j$ by (d).
If $\lambda_{ij}>0$, then $x^*_{ij}=p_{ij}$ by (f), and
$\alpha^*_i+\beta^*_j>x^*_{ij}=p_{ij}$ follows from~(a).
Thus, in each subcase, $x^*_{ij} = \min\{\alpha^*_i+\beta^*_j,\,p_{ij}\}$.

Similarly, in case $j\in M_<$, we consider two subcases.
If $\mu_{ij}=0$, then $x^*_{ij}=\alpha^*_i+\beta^*_j$ by (a), and
$p_{ij}\le x^*_{ij}=\alpha^*_i+\beta^*_j$ by (e).
If $\mu_{ij}>0$, then $x^*_{ij}=p_{ij}$ by (g), and
$\alpha^*_i+\beta^*_j<x^*_{ij}=p_{ij}$ follows from~(a).
Thus, in each subcase, $x^*_{ij} = \max\{\alpha^*_i+\beta^*_j,\,p_{ij}\}$.
\end{proof}

The next lemma 
shows that if an agent is under-allocated relative to her report on some $j\in M_>$ and over-allocated on
some $j'\in M_<$, then her allocation on $j$ is at least her allocation on $j'$.
\medskip

\begin{lemma}\label{lem:relation}
    Let $(x^*_{ij})_{i\in N,\,j\in M}$ be the optimal solution to problem (QP).
    For each $j\in M_>$, $j' \in M_<$, and $i\in N$, if
    $p_{ij} > x^*_{ij}$ and $x^*_{ij'} > p_{ij'}$, then
    $x^*_{ij} \ge x^*_{ij'}$.
\end{lemma}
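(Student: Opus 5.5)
We will deduce the statement directly from the coordinatewise representation in \Cref{prop:opt-rep}, without using the column or row sums.

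Fix $i\in N$, $j\in M_>$, and $j'\in M_<$ with $p_{ij}>x^*_{ij}$ and $x^*_{ij'}>p_{ij'}$. Let $(\alpha^*_i)_{i\in N}$ and $(\beta^*_j)_{j\in M}$ be the constants given by \Cref{prop:opt-rep}.

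First consider object $j\in M_>$. By \eqref{eq:opt-sol} we have $x^*_{ij}=\min\{\alpha^*_i+\beta^*_j,\,p_{ij}\}$. Since $x^*_{ij}<p_{ij}$, the minimum cannot be attained by $p_{ij}$. Hence $x^*_{ij}=\alpha^*_i+\beta^*_j$.

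Next consider object $j'\in M_<$. By \eqref{eq:opt-sol} we have $x^*_{ij'}=\max\{\alpha^*_i+\beta^*_{j'},\,p_{ij'}\}$. Since $x^*_{ij'}>p_{ij'}$, the maximum is attained by the first term. Hence $x^*_{ij'}=\alpha^*_i+\beta^*_{j'}$.

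These two identities give
\[
x^*_{ij}-x^*_{ij'}=\beta^*_j-\beta^*_{j'},
\]
so it suffices to show that $\beta^*_j\ge\beta^*_{j'}$. This is the only nontrivial step. It requires a global argument, because the row $i$ alone does not separate the column multipliers.

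For this step we use an exchange (perturbation) argument on the optimal solution, as an alternative to reasoning about the multipliers directly. Suppose for contradiction that $x^*_{ij}<x^*_{ij'}$. Move a small amount $\varepsilon>0$ within row $i$ from entry $(i,j')$ to entry $(i,j)$, and compensate in some other row $k$ by moving $\varepsilon$ from $(k,j)$ to $(k,j')$, so that all row and column sums are preserved.

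The changes in row $i$ are feasible for the constraints of (QP). Entry $(i,j)$ is strictly below its cap $p_{ij}$, so it may increase. Entry $(i,j')$ is strictly above its floor $p_{ij'}$, so it may decrease.

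For the compensating row $k$ we need $x^*_{kj}>0$, which lets entry $(k,j)$ decrease since $j\in M_>$ carries only an upper cap. Entry $(k,j')$ may always increase, since $j'\in M_<$ carries only a lower bound. Such a $k$ exists because column $j$ sums to $1$ and $x^*_{ij}<x^*_{ij'}\le 1$.

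The first-order change of the objective from this move is
\[
\varepsilon\bigl[(x^*_{ij}-x^*_{ij'})+(x^*_{kj'}-x^*_{kj})\bigr].
\]
The first bracketed term is negative by assumption. The remaining difficulty is to choose $k$ so that $x^*_{kj'}\le x^*_{kj}$, which would make the change strictly negative and contradict optimality. To find such a $k$, sum over rows. Both columns $j$ and $j'$ sum to $1$, so $\sum_k (x^*_{kj}-x^*_{kj'})=0$. Since row $i$ contributes a negative term $x^*_{ij}-x^*_{ij'}<0$, some other row $k$ must satisfy $x^*_{kj}>x^*_{kj'}\ge0$. This row has $x^*_{kj}>0$, so the move is feasible, and the objective strictly decreases. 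This contradicts the optimality of $(x^*_{ij})$.

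Therefore $x^*_{ij}\ge x^*_{ij'}$, as claimed.
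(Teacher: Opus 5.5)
Your proof is correct, but it takes a different route from the paper's: it is primal, whereas the paper's is dual.

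The paper reads everything off the multipliers of Proposition~\ref{prop:opt-rep}. For every other agent $i'$ it gets $x^*_{i'j}-x^*_{ij}\le\alpha^*_{i'}-\alpha^*_i\le x^*_{i'j'}-x^*_{ij'}$. It sums this over all $i'\neq i$ and uses that columns $j$ and $j'$ both sum to $1$.

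You instead argue directly on the feasible set. Suppose $x^*_{ij}<x^*_{ij'}$. The identity $\sum_k(x^*_{kj}-x^*_{kj'})=0$ then supplies a row $k\neq i$ with $x^*_{kj}>x^*_{kj'}\ge p_{kj'}\ge 0$. Adding $\varepsilon$ at $(i,j)$ and $(k,j')$ and subtracting it at $(i,j')$ and $(k,j)$ has the following properties:
\begin{itemize}
\item it preserves all row and column sums;
\item it stays feasible at $(i,j)$ and $(i,j')$ thanks to their strict slacks;
\item it meets the one-sided constraints at $(k,j)$ and $(k,j')$, and even nonnegativity, since $x^*_{kj}>0$;
\item it changes the objective by $\varepsilon[(x^*_{ij}-x^*_{ij'})+(x^*_{kj'}-x^*_{kj})]+2\varepsilon^2$, which is negative for small $\varepsilon$.
\end{itemize}

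The two proofs are really dual views of one fact. The paper's per-agent inequality $x^*_{ij}-x^*_{i'j}\ge x^*_{ij'}-x^*_{i'j'}$ is exactly the first-order optimality condition for your cycle move with $k=i'$. The paper sums that condition over all $i'\neq i$; you pick one violating $k$ by a pigeonhole argument.

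Your version is self-contained and needs no KKT theory. It is the same cycle-cancellation idea the paper later uses in Appendix~\ref{app:prop-qp-sp}. The paper's version takes a few lines once the multiplier representation is available, and that representation is needed anyway for envy-freeness and Theorem~\ref{thm:wf-qp-two-factor}.

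Two small presentational points. First, your opening reduction to $\beta^*_j\ge\beta^*_{j'}$ is never used, since the contradiction is derived directly for $x^*$, so it can be dropped. Second, your initial claim that the argument avoids row and column sums is inaccurate: the choice of $k$ rests precisely on the column sums of $j$ and $j'$.
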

\begin{proof}
    Let $(\alpha^*_i)_{i\in N}$ and $(\beta^*_j)_{j\in M}$ be the constants from \Cref{prop:opt-rep}.
    Take an arbitrary agent $i' \neq i$.
    First, since $x^*_{ij} < p_{ij}$ and $j\in M_>$, we have $x^*_{ij}=\min\{\alpha^*_i+\beta^*_j, p_{ij}\}=\alpha^*_i + \beta^*_j$.
    Moreover, $x^*_{i'j} \leq \alpha^*_{i'}+\beta^*_j$.
    Thus, $x^*_{i'j} - x^*_{ij} \leq (\alpha^*_{i'}+\beta^*_j) - (\alpha^*_{i}+\beta^*_j) = \alpha^*_{i'}-\alpha^*_i$.
    Second, since $x^*_{ij'} > p_{ij'}$ and $j' \in M_<$, we have $x^*_{ij'}=\max\{\alpha^*_i+\beta^*_{j'},p_{ij'}\} = \alpha^*_i+\beta^*_{j'}$.
    Moreover, $x^*_{i'j'} \geq \alpha^*_{i'}+\beta^*_{j'}$.
    Thus, $x^*_{i'j'} - x^*_{ij'} \geq (\alpha^*_{i'}+\beta^*_{j'}) - (\alpha^*_{i}+\beta^*_{j'}) = \alpha^*_{i'}-\alpha^*_i$.

    By combining these observations, we obtain $x^*_{i'j'} - x^*_{ij'} \geq x^*_{i'j} - x^*_{ij}$ for every $i'\in N$, or equivalently, $x^*_{ij} - x^*_{i'j} \geq x^*_{ij'} - x^*_{i'j'}$.
    Summing this inequality over all $i' \neq i$, we obtain
    \begin{align*}
    \MoveEqLeft
        (n-1)x^*_{ij} - \sum_{i'\neq i} x^*_{i'j} \geq (n-1)x^*_{ij'} - \sum_{i'\neq i} x^*_{i'j'}.
    \end{align*}
    Since $\sum_{i'\in N} x^*_{i'j} = \sum_{i'\in N} x^*_{i'j'}=1$, this implies that $(n-1)x^*_{ij} - (1-x^*_{ij}) \geq  (n-1)x^*_{ij'} - (1 - x^*_{ij'})$.
    It follows that $x^*_{ij} \geq x^*_{ij'}$, as desired.
\end{proof}

\Cref{lem:relation} implies that the QP allocation for each agent stays within the
range of that agent's reported amounts.
\begin{lemma}\label{lem:maxmin}
    Let $(x^*_{ij})_{i\in N,\,j\in M}$ be the optimal solution to problem (QP).
    For each agent $i\in N$ and object $j\in M$, we have $\min_{k\in M}p_{ik}\le x^*_{ij}\le \max_{k\in M}p_{ik}$.
\end{lemma}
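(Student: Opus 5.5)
We prove the two inequalities separately, in each case splitting according to whether $x^*_{ij}$ coincides with $p_{ij}$. Fix $i\in N$ and $j\in M$, and write $m_i\coloneqq\min_{k\in M}p_{ik}$ and $m'_i\coloneqq\max_{k\in M}p_{ik}$. If $x^*_{ij}=p_{ij}$, both bounds are trivial; this covers every $j\in M_=$ by \Cref{lem:uo-criterion}. The nontrivial entries are therefore the over-allocated ones ($j\in M_<$ with $x^*_{ij}>p_{ij}$) and the under-allocated ones ($j\in M_>$ with $x^*_{ij}<p_{ij}$).

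\textbf{Upper bound.} For $j\in M_>$ we have $x^*_{ij}\le p_{ij}\le m'_i$. So suppose $j\in M_<$ and $x^*_{ij}>p_{ij}$.
\begin{itemize}
\item Since row $i$ sums to $1$ for both $\bm{x}^*_i$ and $\bm{p}_i$, and $x^*_{ik}\ge p_{ik}$ on $M_\le$ with strict inequality at $j$, there must be some $k\in M_>$ with $x^*_{ik}<p_{ik}$.
\item \Cref{lem:relation} applied to the pair $(k,j)$ then gives $x^*_{ij}\le x^*_{ik}<p_{ik}\le m'_i$.
\end{itemize}

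\textbf{Lower bound.} This is symmetric. For $j\in M_<$ we have $x^*_{ij}\ge p_{ij}\ge m_i$. So suppose $j\in M_>$ and $x^*_{ij}<p_{ij}$.
\begin{itemize}
\item Row-sum balancing yields some $k\in M_<$ with $x^*_{ik}>p_{ik}$.
\item \Cref{lem:relation} gives $x^*_{ij}\ge x^*_{ik}>p_{ik}\ge m_i$.
\end{itemize}

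The only step needing care is the balancing argument. Summing $x^*_{ik}-p_{ik}$ over $k\in M$ gives $0$. All terms on $M_=$ vanish, terms on $M_<$ are nonnegative, and terms on $M_>$ are nonpositive. Hence a strictly positive term forces a strictly negative one, and conversely.

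As a consequence, every entry of $X^*$ lies in $[0,1]$, which justifies the later claim that the nonnegativity constraints are redundant.
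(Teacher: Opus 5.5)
Your proof is correct and is essentially the paper's argument: row-sum balancing produces an object of the opposite demand type on which agent $i$ deviates in the opposite direction, and then \Cref{lem:relation} transfers the bound. The paper phrases this as a proof by contradiction while you argue directly with an explicit case split, which is equivalent; one small point is that you should read off $x^*_{ij}=p_{ij}$ for $j\in M_=$ directly from constraints (d) and (e) of (QP) rather than from \Cref{lem:uo-criterion}, since nonnegativity of $X^*$ is only established as a consequence of this lemma.
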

\begin{proof}
  Assume, for the sake of contradiction, that $x^*_{ij}>\max_{k\in M}p_{ik}$ for some $i\in N$ and $j\in M$.
  In particular, $x^*_{ij}>p_{ij}$.
  By conditions (d) and (e), we have $j\in M_<$.
  Since $\sum_{k\in M} x^*_{ik} = \sum_{k\in M} p_{ik} = 1$, there exists an object~$j'$ such that $x^*_{ij'} < p_{ij'}$.
  Thus, by conditions (d) and (e), we have $j'\in M_>$.
  By \Cref{lem:relation}, we obtain $x^*_{ij}\le x^*_{ij'}<p_{ij'}\le \max_{k\in M}p_{ik}$.
  This is a contradiction.

  Similarly, suppose for the sake of contradiction that $x^*_{ij}<\min_{k\in M}p_{ik}$ for some $i\in N$ and $j\in M$.
  In particular, $x^*_{ij}<p_{ij}$.
  By conditions (d) and (e), we have $j\in M_>$.
  Since $\sum_{k\in M} x^*_{ik} = \sum_{k\in M} p_{ik} = 1$, there exists an object $j'$ such that $x^*_{ij'} > p_{ij'}$.
  Thus, by conditions (d) and (e), we have $j'\in M_<$.
  By \Cref{lem:relation}, we obtain $x^*_{ij}\ge x^*_{ij'}>p_{ij'}\ge \min_{k\in M}p_{ik}$.
  This is a contradiction.
\end{proof}

The corollary below justifies the formulation of ${\rm(QP)}$ without explicit nonnegativity
constraints.
\begin{corollary}\label{cor:nonneg}
The optimal solution $(x^*_{ij})_{i\in N,\,j\in M}$ to problem ${\rm(QP)}$ is a fractional allocation.
\end{corollary}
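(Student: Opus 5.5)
The corollary reduces to checking that the optimizer of (QP) is entrywise nonnegative. The row and column sum constraints (b) and (c) are already imposed in (QP), so the only property of a doubly stochastic matrix still missing is $x^*_{ij}\ge 0$. Once that is known, the bound $x^*_{ij}\le 1$ follows at once from nonnegativity and $\sum_{r\in N}x^*_{rj}=1$.

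For nonnegativity, I will apply \Cref{lem:maxmin}. It gives, for every $i\in N$ and $j\in M$,
\[
x^*_{ij}\ \ge\ \min_{k\in M}p_{ik}\ \ge\ 0,
\]
since $\bm{p}_i\in\Delta$ has nonnegative entries. The same lemma also gives the upper bound directly, $x^*_{ij}\le\max_{k\in M}p_{ik}\le 1$.

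Hence $X^*$ is nonnegative with unit row and column sums, so $X^*\in\mathcal{B}$. By \Cref{lem:uo-criterion}, the constraints (d) and (e) then place it in $\mathcal{U}(P)$.

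It follows that dropping the nonnegativity constraints from (QP) does not change its optimum. The minimizer over the relaxed feasible set already lies in the smaller set $\mathcal{U}(P)$, so it is also the minimizer of \eqref{eq:QP} over $\mathcal{U}(P)$.

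There is one potential gap. \Cref{prop:opt-rep}, and through it \Cref{lem:relation} and \Cref{lem:maxmin}, were derived for (QP) as written, that is, without sign constraints. For the argument to hold, the KKT analysis must refer to that same relaxed problem. It does: the KKT system (a)--(h) has no multipliers for nonnegativity. The relaxed problem has a nonempty feasible set (it contains $\mathcal{U}(P)$) and a strictly convex, coercive objective, so its unique minimizer exists. Its constraints are linear, so the KKT conditions hold at that minimizer. I do not expect any real obstacle; this consistency check is the only point needing care.
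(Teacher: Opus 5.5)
Your proposal is correct and follows the paper's own argument: unit row and column sums come from feasibility in (QP), and nonnegativity comes from \Cref{lem:maxmin} via $x^*_{ij}\ge\min_{k\in M}p_{ik}\ge 0$. Your check that the KKT analysis refers to the relaxed problem without sign constraints, where linear constraints guarantee the KKT conditions, is sound and makes explicit what the paper leaves implicit.
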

\begin{proof}
By feasibility of problem ${\rm(QP)}$, 
we have $\sum_{j\in M}x^*_{ij}=1$ for all $i\in N$ and $\sum_{i\in N}x^*_{ij}=1$ for all $j\in M$. 
Additionally, by \Cref{lem:maxmin}, we have $x^*_{ij}\ge \min_{k\in M}p_{ik}\ge 0$ for every $i\in N$ and $j\in M$.
Thus, $(x^*_{ij})_{i\in N,\,j\in M}$ is a fractional allocation.
\end{proof}

\subsection{Envy-Freeness}
\label{subsec:qp-envy-freeness}

We next use the coordinatewise KKT representation to show that the QP allocation is envy-free.
\begin{proposition}\label{prop:qp-ef}
The mechanism $\mechQP$ is envy-free.
\end{proposition}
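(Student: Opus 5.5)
The plan is to use the coordinatewise representation of \Cref{prop:opt-rep} together with the fact that, for two distributions $\bm{q},\bm{p}_i\in\Delta$, the overlap utility can be written in two equivalent ways:
\begin{align}
u_i(\bm{q},\bm{p}_i)=1-\sum_{j\in M}(p_{ij}-q_j)^+=1-\sum_{j\in M}(q_j-p_{ij})^+ .
\end{align}
This holds because $\min\{q_j,p_{ij}\}=p_{ij}-(p_{ij}-q_j)^+=q_j-(q_j-p_{ij})^+$ and both $\bm{q}$ and $\bm{p}_i$ sum to~$1$. Fix $P$, let $X^*=\mechQP(P)$ with constants $(\alpha^*_i)$ and $(\beta^*_j)$ from \Cref{prop:opt-rep}, and fix agents $i,i'$. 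By \Cref{cor:nonneg}, $\bm{x}^*_{i'}\in\Delta$, so both expressions apply to $u_i(\bm{x}^*_{i'},\bm{p}_i)$.

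Define the deficit set $D\coloneqq\{j\in M_>\mid x^*_{ij}<p_{ij}\}$ and the excess set $E\coloneqq\{j\in M_<\mid x^*_{ij}>p_{ij}\}$. By conditions (d) and (e), these are exactly the objects on which $x^*_{ij}$ differs from $p_{ij}$ in the respective direction. Hence
\begin{align}
u_i(\bm{x}^*_i,\bm{p}_i)=1-\sum_{j\in D}(p_{ij}-x^*_{ij})=1-\sum_{j\in E}(x^*_{ij}-p_{ij}).
\end{align}
On $D$ the minimum in \eqref{eq:opt-sol} is attained by the linear term, so $x^*_{ij}=\alpha^*_i+\beta^*_j$. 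On $E$ the maximum is attained by the linear term, so again $x^*_{ij}=\alpha^*_i+\beta^*_j$. For agent $i'$, \eqref{eq:opt-sol} gives $x^*_{i'j}\le\alpha^*_{i'}+\beta^*_j$ for $j\in M_>$ and $x^*_{i'j}\ge\alpha^*_{i'}+\beta^*_j$ for $j\in M_<$.

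The argument then splits on the sign of $\alpha^*_{i'}-\alpha^*_i$, choosing the representation of utility that matches.

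\emph{Case $\alpha^*_{i'}\le\alpha^*_i$.} For every $j\in D$ we have $x^*_{i'j}\le\alpha^*_{i'}+\beta^*_j\le x^*_{ij}<p_{ij}$. Therefore
\begin{align}
\sum_{j\in M}(p_{ij}-x^*_{i'j})^+\ge\sum_{j\in D}(p_{ij}-x^*_{ij}),
\end{align}
which gives $u_i(\bm{x}^*_{i'},\bm{p}_i)\le u_i(\bm{x}^*_i,\bm{p}_i)$.

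\emph{Case $\alpha^*_{i'}>\alpha^*_i$.} For every $j\in E$ we have $x^*_{i'j}\ge\alpha^*_{i'}+\beta^*_j>x^*_{ij}>p_{ij}$. Therefore
\begin{align}
\sum_{j\in M}(x^*_{i'j}-p_{ij})^+\ge\sum_{j\in E}(x^*_{ij}-p_{ij}),
\end{align}
and the excess form of the utility gives the same conclusion.

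The only step needing care is recognizing that object-wise domination, as used for \Cref{prop:wf-ef}, need not hold for QP. Instead, one must pick the deficit or the excess representation of utility according to which of $\alpha^*_i$ and $\alpha^*_{i'}$ is larger. Once that choice is made, each case reduces to a one-line comparison using \Cref{prop:opt-rep}.
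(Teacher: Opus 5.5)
Your proof is correct and follows essentially the same route as the paper: the same case split on whether $\alpha^*_i\ge\alpha^*_{i'}$, and the same key inequalities, both read off from \Cref{prop:opt-rep}. These are $x^*_{i'j}\le x^*_{ij}<p_{ij}$ on the deficit objects in the first case and $x^*_{i'j}>x^*_{ij}>p_{ij}$ on the excess objects in the second. The only difference is bookkeeping: the paper bounds $|x^*_{i'j}-p_{ij}|-|x^*_{ij}-p_{ij}|$ object by object over all three object types and sums to $\sum_{j\in M}(x^*_{i'j}-x^*_{ij})=0$, whereas your one-sided deficit and excess expressions for $u_i$ let you look only at $D$ (resp.\ $E$).
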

\begin{proof}
Let $P\in\Delta^N$ and $X^*\coloneqq\mechQP(P)$, and let $(\alpha^*_i)_{i\in N}$ and $(\beta^*_j)_{j\in M}$ be the constants given in \Cref{prop:opt-rep}.
To prove envy-freeness, fix two agents $i,i'\in N$.
We will show that $\du(\bm{x}^*_{i'},\bm{p}_i)-\du(\bm{x}^*_i,\bm{p}_i)\ge 0$.
We divide the cases according to whether $\alpha_i^*\ge \alpha^*_{i'}$ or $\alpha_i^*< \alpha^*_{i'}$.
For each object $j\in M$, we evaluate $|x^*_{i'j}-p_{ij}|-|x^*_{ij}-p_{ij}|$, distinguishing the following cases:
(a) $x^*_{ij}= p_{ij}$,
(b)~$x^*_{ij}< p_{ij}$, or
(c) $x^*_{ij}> p_{ij}$.

\medskip

\underline{Case 1}: $\alpha^*_i \geq \alpha^*_{i'}$.
For object $j\in M$ satisfying (a), we have
\begin{align}
    |x^*_{i'j}-p_{ij}|-|x^*_{ij}-p_{ij}|=|x^*_{i'j}-x^*_{ij}|\ge x^*_{i'j}-x^*_{ij}.
    \label{eq:EF>=}
\end{align}
For object $j\in M$ satisfying (b), 
\Cref{prop:opt-rep} implies that $j\in M_>$ and 
$x^*_{i'j}=\min\{\alpha^*_{i'}+\beta^*_j,p_{i'j}\}\le \alpha^*_{i'}+\beta^*_j\le\alpha^*_i+\beta^*_j = x^*_{ij} < p_{ij}$.
Thus, we get
\begin{align}
|x^*_{i'j}-p_{ij}|-|x^*_{ij}-p_{ij}|
= (p_{ij}-x^*_{i'j})-(p_{ij}-x^*_{ij})
= x^*_{ij}-x^*_{i'j}
\ge 0 \ge x^*_{i'j}-x^*_{ij}.
\label{eq:EF>>}
\end{align}
For object $j\in M$ satisfying (c), we have
\begin{align}
    |x^*_{i'j}-p_{ij}|-|x^*_{ij}-p_{ij}|
    \ge (x^*_{i'j}-p_{ij})-(x^*_{ij}-p_{ij})
    = x^*_{i'j}-x^*_{ij}.
    \label{eq:EF><}
\end{align}
Therefore, by combining~\eqref{eq:EF>=}, \eqref{eq:EF>>}, and \eqref{eq:EF><}, we obtain
\begin{align*}
\du(\bm{x}^*_{i'},\bm{p}_i)-\du(\bm{x}^*_i,\bm{p}_i)
&=\sum_{j\in M}(|x^*_{i'j}-p_{ij}|-|x^*_{ij}-p_{ij}|)\\
&\ge\sum_{j\in M}(x^*_{i'j}-x^*_{ij})
=\sum_{j\in M}x^*_{i'j}-\sum_{j\in M}x^*_{ij}
=1-1
=0.
\end{align*}

\medskip

\underline{Case 2}: $\alpha^*_i < \alpha^*_{i'}$.
For object $j\in M$ satisfying (a), we have
\begin{align}
    |x^*_{i'j}-p_{ij}|-|x^*_{ij}-p_{ij}|=|x^*_{i'j}-x^*_{ij}|\ge x^*_{ij}-x^*_{i'j}.
    \label{eq:EF<=}
\end{align}
For object $j\in M$ satisfying (b), we have
\begin{align}
    |x^*_{i'j}-p_{ij}|-|x^*_{ij}-p_{ij}|
    \ge (p_{ij}-x^*_{i'j})-(p_{ij}-x^*_{ij})
    = x^*_{ij}-x^*_{i'j}.
    \label{eq:EF<>}
\end{align}
For object $j\in M$ satisfying (c), 
\Cref{prop:opt-rep} implies that $j\in M_<$ and $x^*_{ij}=\max\{\alpha^*_{i}+\beta^*_j,p_{ij}\}=\alpha^*_{i}+\beta^*_j> p_{ij}$.
Additionally, we have $x^*_{i'j}=\max\{\alpha^*_{i'}+\beta^*_j,p_{i'j}\}\ge \alpha^*_{i'}+\beta^*_j$, and since $\alpha^*_i < \alpha^*_{i'}$, we obtain $x^*_{i'j} \geq \alpha^*_{i'}+\beta^*_j > \alpha^*_i+\beta^*_j = x^*_{ij} > p_{ij}$.
Thus,
\begin{align}
    |x^*_{i'j}-p_{ij}|-|x^*_{ij}-p_{ij}|
    =(x^*_{i'j}-p_{ij})-(x^*_{ij}-p_{ij})
    =x^*_{i'j}-x^*_{ij}
    \ge 0\ge x^*_{ij}-x^*_{i'j}.
    \label{eq:EF<<}
\end{align}
Therefore, combining~\eqref{eq:EF<=}, \eqref{eq:EF<>}, and \eqref{eq:EF<<}, we obtain
\begin{align*}
\du(\bm{x}^*_{i'},\bm{p}_i)-\du(\bm{x}^*_i,\bm{p}_i)
&=\sum_{j\in M}(|x^*_{i'j}-p_{ij}|-|x^*_{ij}-p_{ij}|)\\
&\ge\sum_{j\in M}(x^*_{ij}-x^*_{i'j})
=\sum_{j\in M}x^*_{ij}-\sum_{j\in M}x^*_{i'j}
=1-1
=0.
\end{align*}

\medskip

The two cases together complete the proof.
\end{proof}

\subsection{Strategyproofness}
\label{sec:proof-strategyproofness}

The main incentive result for the QP mechanism is strategyproofness.
\begin{proposition}
\label{prop:qp-sp}
The mechanism $\mechQP$ is strategyproof.
\end{proposition}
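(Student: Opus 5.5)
The plan is to argue by contradiction, combining the KKT representation of \Cref{prop:opt-rep} with the fact that $\mechQP$ is a Euclidean projection of the origin onto a convex set. Fix $P$, an agent $i$, and a misreport $\bm{q}_i$. Write $Q\coloneqq(\bm{q}_i,\bm{p}_{-i})$, $X^*\coloneqq\mechQP(P)$ and $X'\coloneqq\mechQP(Q)$. Let $(\alpha^*,\beta^*)$ and $(\alpha',\beta')$ be the constants of \Cref{prop:opt-rep} for $P$ and $Q$.

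First I will rewrite agent $i$'s truthful utility. By \Cref{lem:uo-criterion}, $x^*_{ij}\ge p_{ij}$ for $j\in M_\le(P)$. Hence
\[
u_i(\bm{x}^*_i,\bm{p}_i)=1-\sum_{j\in D}\bigl(p_{ij}-x^*_{ij}\bigr),
\]
where $D\coloneqq\{j\in M_>(P)\mid x^*_{ij}<p_{ij}\}$. By \Cref{prop:opt-rep}, $x^*_{ij}=\alpha^*_i+\beta^*_j$ for $j\in D$, and $x^*_{kj}=\min\{\alpha^*_k+\beta^*_j,p_{kj}\}$ for $k\neq i$. Since the overlap satisfies $u_i(\bm{x}'_i,\bm{p}_i)\le 1-\sum_{j\in D}(p_{ij}-x'_{ij})^+$, it suffices to show
\[
\sum_{j\in D}x'_{ij}\le\sum_{j\in D}x^*_{ij}.
\]
In words, the misreport cannot give agent $i$ more in total of the objects on which she is truthfully rationed. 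Equivalently, the other agents' total share of the columns in $D$ cannot decrease, that is, $\sum_{k\ne i}\sum_{j\in D}x'_{kj}\ge\sum_{k\ne i}\sum_{j\in D}x^*_{kj}$.

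Second, I will use the variational inequalities characterizing the two projections:
\[
\langle X^*,Y-X^*\rangle\ge 0\ \ (Y\in\mathcal U(P)),\qquad \langle X',Y'-X'\rangle\ge 0\ \ (Y'\in\mathcal U(Q)).
\]
The rows $k\neq i$ carry identical caps in $P$ and $Q$. The two feasible sets therefore differ only through row $i$ and through columns whose classification into $M_>,M_=,M_<$ changes when $\bm{p}_i$ is replaced by $\bm{q}_i$. Suppose, for contradiction, that $\sum_{j\in D}x'_{ij}>\sum_{j\in D}x^*_{ij}$. I would then build a small perturbation of $X^*$ that moves mass in the direction of $X'$ on rows $k\ne i$. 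Concretely, take $Y=X^*+\varepsilon(Z-X^*)$, where $Z$ agrees with $X'$ on the other agents' entries and has row $i$ repaired so that $Z\in\mathcal U(P)$. Feasibility for small $\varepsilon$ should come from the sign structure: entries of other agents in $D$-columns are strictly below their caps whenever they are of the form $\alpha^*_k+\beta^*_j$. I would then show $\langle X^*,Z-X^*\rangle<0$ using the additive form $\alpha^*_k+\beta^*_j$ on the free entries, exactly as in the proofs of \Cref{lem:relation} and \Cref{prop:qp-ef}. The row sums and column sums kill the $\alpha^*$ and $\beta^*$ contributions, leaving a term whose sign is fixed by the contradiction hypothesis. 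This would contradict the first variational inequality.

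The main obstacle is this second step. Row $i$ of $Z$ must be repaired so that $Z$ lies in $\mathcal U(P)$ while staying close enough to $X'$, and columns may switch between $M_>$, $M_=$ and $M_<$ under the misreport. My first attempt is a reduction. I would show that, without loss of generality, $\bm{q}_i$ only lowers reports on $D$ and raises them elsewhere. Other misreports can be replaced by such ones without decreasing agent $i$'s true overlap, using the monotonicity of the clipped form $\min/\max\{\alpha+\beta,p\}$ in the caps. After this reduction the classification of columns outside $D$ is preserved, and the perturbation argument only has to handle the columns in $D$. \Cref{lem:maxmin} would be used to keep the perturbed entries within $[0,1]$.
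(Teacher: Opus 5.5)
\textbf{There is a genuine gap, in both steps.} Your first reduction replaces strategyproofness with the claim $\sum_{j\in D}x'_{ij}\le\sum_{j\in D}x^*_{ij}$, and this claim is false for general misreports. Take $n=3$, $\bm p_1=(1/2,1/2,0)$ and $\bm p_2=\bm p_3=(1,0,0)$. Then $M_>=\{1\}$ and $M_<=\{2,3\}$. The matrix $\mechQP(P)$ has rows $(1/4,1/2,1/4)$, $(3/8,1/4,3/8)$, $(3/8,1/4,3/8)$, certified by $\alpha^*=(0,1/8,1/8)$ and $\beta^*=(1/4,1/8,1/4)$ in \Cref{prop:opt-rep}. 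So $D=\{1\}$ and $x^*_{11}=1/4$. If agent~$1$ reports $(1,0,0)$, all reports coincide and $\mechQP$ returns the uniform matrix, so $x'_{11}=1/3>1/4$.

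Her true overlap still falls from $3/4$ to $2/3$, because she loses $1/6$ on object~$2\notin D$. That is exactly the loss your bound $u_i(\bm x'_i,\bm p_i)\le 1-\sum_{j\in D}(p_{ij}-x'_{ij})^+$ discards. Any correct proof must trade gains on $D$ against losses outside $D$.

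Your fallback, the ``without loss of generality'' restriction to misreports that lower reports on $D$ and raise them elsewhere, is not justified. Replacing an arbitrary misreport by such a one without lowering the true overlap is itself a comparison of outcomes across different reports, which is the statement being proved. The only evident witness is the truthful report. Monotonicity of $\min/\max\{\alpha+\beta,p\}$ in the caps gives nothing here, because $\alpha$ and $\beta$ change globally. The restriction also does not preserve column classes outside $D$: raising a report on a column in $M_=$ or $M_<$ can move it into $M_>$.

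The second step cannot produce a contradiction as designed. For every $Z\in\mathcal U(P)$, substitute $x^*_{kj}=\alpha^*_k+\beta^*_j-\lambda^*_{kj}+\mu^*_{kj}$. Since $Z$ and $X^*$ have the same row and column sums, the potentials cancel, and complementary slackness gives
\[
\langle X^*,Z-X^*\rangle=\sum_{k,j}\lambda^*_{kj}(p_{kj}-z_{kj})+\sum_{k,j}\mu^*_{kj}(z_{kj}-p_{kj})\ \ge\ 0 .
\]
This holds for every feasible $Z$, with no reference to $X'$. So the leftover term is always nonnegative, and the contradiction hypothesis never enters. A genuinely two-sided comparison of the two projections would need $X'$-like test points in $\mathcal U(P)$ and $X^*$-like test points in $\mathcal U(Q)$. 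These sets are not nested: row $i$'s caps move and columns are reclassified. That is the obstacle you flag but do not resolve.

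The paper avoids it by working locally. It moves the report along $\bm p_i+\tau(\bm q_i-\bm p_i)$, on which $X(\tau)$ is piecewise affine with a fixed active set on each piece. The velocity on a piece is the minimum-norm solution of a tangent program with zero row and column sums and pinned frozen entries. A negative slope of $\du(\bm x_i(\tau),\bm p_i)$ would then yield a directed cycle through agent~$i$ in this circulation that avoids frozen entries, and cancelling it strictly lowers the norm. That argument tracks the full true disutility, not just the share on $D$.
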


The proof follows a one-dimensional path argument and is given in Appendix~\ref{app:prop-qp-sp}.
At a high level, we connect the truthful report $\bm{p}_i$
to an arbitrary report $\bm{q}_i$ by a line segment $\bm{q}_i(\tau)=\bm{p}_i+\tau(\bm{q}_i-\bm{p}_i)$ for $\tau\in[0,1]$, and track the QP outcome $X(\tau)$ along this segment.
The key observation is that this path can be decomposed into finitely many intervals on which the relevant combinatorial structure of the QP remains fixed. 
On each such interval, both $X(\tau)$ and $\du(\bm{x}_i(\tau),\bm{p}_i)$ are affine.
Thus, it suffices to show that the slope of $\du(\bm{x}_i(\tau),\bm{p}_i)$ is nonnegative on every interval.

Fix one such interval, and let $Y=\dd{X(\tau)}/\dd{\tau}$ denote the constant velocity matrix on it. 
By differentiating the KKT conditions of the QP while keeping the active constraints fixed,
one can characterize $Y$ as the minimum $\ell_2$-norm point satisfying the tangent constraints.
Since the row and column sums of $Y$ are zero, this velocity matrix can be viewed as a circulation in the bipartite agent--object graph. 
We show that, if the slope were negative, then there would exist a directed cycle such that subtracting a small multiple of this cycle from $Y$ preserves all tangent constraints while strictly decreasing the $\ell_2$-norm of $Y$, thereby contradicting the minimum-norm characterization of the QP velocity.
Hence, every interval has a nonnegative slope, and summing over the finitely many intervals yields the desired conclusion.

\section{Comparing the WF and QP Mechanisms}
\label{sec:wf-qp-comparison}

The WF and QP mechanisms are both utilitarian-optimal,
envy-free, and strategyproof, and satisfy equal treatment of equals. 
We have seen that the WF mechanism satisfies group-strategyproofness, while the QP mechanism fails even weak group-strategyproofness.
In this section, we highlight an advantage of the QP mechanism by showing that it provides better guarantees in terms of \emph{egalitarian overlap welfare}, the smallest overlap utility among all agents.

We first show that the WF mechanism can have much smaller egalitarian overlap welfare than the
QP mechanism.
For an allocation $X$, write
$\EG(X,P)\coloneqq \min_{i\in N}u_i(\bm{x}_i,\bm{p}_i)$ for its egalitarian overlap welfare with respect to profile~$P$. 

\begin{proposition}
\label{prop:wf-linearly-worse}
For each $n\ge 3$, there exists a profile $P\in\Delta^N$ such that
$\EG(\mechWF(P),P)=1/n$ and $\EG(\mechQP(P),P)=1/2$. Consequently,
$\EG(\mechQP(P),P)/\EG(\mechWF(P),P)=n/2$.
\end{proposition}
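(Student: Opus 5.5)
The plan is to generalize \Cref{ex:small_instance}, where $n=3$ already gives $\EG(\mechWF(P),P)=1/3$ and $\EG(\mechQP(P),P)=1/2$. Agent~$1$ demands object~$1$ alone, $\bm{p}_1=\bm{e}_1$. Every other agent $i\in\{2,\dots,n\}$ puts mass $1/2$ on object~$1$ and spreads the remaining $1/2$ uniformly over objects $2,\dots,n-1$. Object~$n$ then has zero demand, so $n\in M_<$, and object~$1$ has demand $(n+1)/2>1$, so $1\in M_>$. Each object $j\in\{2,\dots,n-1\}$ has demand $(n-1)/(2(n-2))$, which is $\ge1$, so $j\in M_\ge$. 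For $n=3$ this is exactly \Cref{ex:small_instance}. If the exact value $1/2$ for QP fails for this family, I will perturb the weights of agents $2,\dots,n$ on objects $2,\dots,n-1$ and repeat the KKT computation below.

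\emph{The WF side.} I would use the closed form of Phase~1 together with \Cref{prop:wf-min-equals-phase1}. On column~$1$ the water level $t_1$ solves $\min\{1,t_1\}+(n-1)\min\{1/2,t_1\}=1$, which gives $t_1=1/n$. Agent~$1$ has positive target only on object~$1$, so her overlap is exactly $y_{11}=1/n$. Every other agent $i$ has overlap at least $y_{i1}=1/n$. Hence $\EG(\mechWF(P),P)=1/n$.

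\emph{The QP side.} I would guess the solution using the symmetry among agents $2,\dots,n$, which are equal up to the symmetric structure (compare \Cref{prop:equal-treatment}). The guess is $x_{11}=1/2$ and $x_{1n}=1/2$; $x_{i1}=1/(2(n-1))$ for $i\ge2$; and the leftover of object~$n$, namely $1/(2(n-1))$, goes to each agent $i\ge2$. Since $1\in M_>$ and $n\in M_<$, \Cref{lem:relation} forces agent~$1$ to satisfy $x_{11}\ge x_{1n}$. Together with her row sum and $x_{1j}\le p_{1j}=0$ for $j\in\{2,\dots,n-1\}$, this gives $x_{11}\ge1/2$, so her overlap is at least $1/2$. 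Each agent $i\ge2$ gets overlap $1/2+1/(2(n-1))\ge1/2$ when the objects in $M_\ge$ are given exactly at their targets. I would then verify the guess by exhibiting multipliers $\alpha_1,\alpha_i,\beta_1,\beta_n,\dots$ that realize the representation of \Cref{prop:opt-rep}. Because (QP) is convex with affine constraints, the KKT conditions are also sufficient, so a feasible point with valid multipliers is the unique optimum. If the optimum makes agent~$1$ the worst-off agent with $x_{11}=1/2$, then $\EG=1/2$ and the ratio $n/2$ follows.

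\emph{Main obstacle.} The hard step is the exact QP computation for general $n$: checking that the guessed point is feasible (columns $2,\dots,n-1$ must sum to exactly $1$) and that admissible multipliers exist. My first attempt is to reduce by symmetry to two or three scalar parameters, minimize the resulting one-dimensional quadratic as in \Cref{ex:QP}, and then check the sign conditions (f)--(g). Should $x_{11}$ come out strictly above $1/2$, I will adjust agent~$1$'s target, for instance by adding a small target on object~$n$, so that her QP overlap is exactly $1/2$ while her WF overlap stays $1/n$.
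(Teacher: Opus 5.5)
\textbf{There is a genuine gap: your profile only works for $n=3$.}

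For $n\ge 4$, the demand on each object $j\in\{2,\dots,n-1\}$ is $(n-1)/(2(n-2))$. This is strictly less than $1$ (e.g., $3/4$ for $n=4$), so these objects lie in $M_<$, not $M_\ge$.

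The constraint $x_{1j}\le p_{1j}=0$, which you use to derive $x_{11}\ge 1/2$, therefore does not exist. Agent~$1$ may receive positive amounts of every object $2,\dots,n$, and \Cref{lem:relation} then only yields $x_{11}\ge 1/n$.

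In fact, the QP outcome collapses completely for this family. The uniform matrix with all entries $1/n$ lies in $\mathcal U(P)$:
\begin{itemize}
\item column~$1$ requires $1/n\le 1/2$;
\item columns $2,\dots,n-1$ require $1/n\ge 1/(2(n-2))$, which is equivalent to $n\ge 4$.
\end{itemize}
By Cauchy--Schwarz, the uniform matrix minimizes $\sum_{i,j}x_{ij}^2$ over all doubly stochastic matrices. Hence $\mechQP(P)$ is uniform, and $\EG(\mechQP(P),P)=1/n=\EG(\mechWF(P),P)$ for every $n\ge 4$.

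Your guessed QP point is also infeasible for $n\ge 4$. If agents $i\ge 2$ receive exactly their targets on objects $2,\dots,n-1$, their rows sum to $1/2+1/(n-1)<1$, and those columns sum to $(n-1)/(2(n-2))<1$. The fallback plans (``perturb the weights'', ``adjust agent~$1$'s target'') are not carried out. So, as written, the proposal does not prove the statement.

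\textbf{The missing idea} is to make objects $2,\dots,n-1$ at least exactly demanded by agents $2,\dots,n$, so that agent~$1$ is locked out of them.

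Suppose those agents put weight $w$ on object~$1$ and $(1-w)/(n-2)$ on each of objects $2,\dots,n-1$. Then you need $w\le 1/(n-1)$; your choice $w=1/2$ satisfies this only when $n=3$. The paper takes $w=1/(n-1)$, i.e., $\bm p_i=(1/(n-1),\dots,1/(n-1),0)$, which makes these objects exactly demanded. The argument then goes through as follows:
\begin{itemize}
\item \Cref{lem:uo-criterion} forces $x_{1j}=0$ and $x_{ij}=1/(n-1)$ on those columns, so agent~$1$'s row reduces to $x_{11}+x_{1n}=1$.
\item By uniqueness of the QP optimum and the symmetry among agents $2,\dots,n$, it suffices to consider the one-parameter family $x_{11}=a$. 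On it, the objective is $\tfrac12\bigl(1+\tfrac1{n-1}\bigr)\bigl(a^2+(1-a)^2\bigr)$ plus a constant, which is minimized at $a=1/2$.
\item Your WF argument carries over unchanged: the water level on column~$1$ is still $1/n$ because $1/n<1/(n-1)$.
\end{itemize}
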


\begin{proof}
Let $N=M=\{1,\dots,n\}$. 
Let $\bm{p}_1=(1,0,\dots,0)$ and, for each $i \in \{2,\dots,n\}$, let
$\bm{p}_i=(1/(n-1),\dots,1/(n-1),0)$, where the first $n-1$ coordinates are $1/(n-1)$.
Then, object $1$ is over-demanded, objects $2,\dots,n-1$ are exactly demanded, and object $n$
is under-demanded.

By \Cref{lem:uo-criterion}, every symmetric utilitarian-optimal allocation has the following form for some
$a\in[0,1]$: $x_{11}=a$, $x_{1n}=1-a$, and, for every $i\in\{2,\dots,n\}$,
$x_{i1}=(1-a)/(n-1)$, $x_{ij}=1/(n-1)$ for $2\le j\le n-1$, and
$x_{in}=a/(n-1)$. 
The corresponding overlap utilities are $u_1(\bm{x}_1,\bm{p}_1)=a$ and
$u_i(\bm{x}_i,\bm{p}_i)=1-a/(n-1)$ for $i\in\{2,\dots,n\}$.

For the WF mechanism, object $1$ is filled at equal speed across all $n$ agents until the column is
full. 
Hence, each agent receives $1/n$ of object $1$ in Phase~1. Objects $2,\dots,n-1$ are filled
to the caps of agents $2,\dots,n$, and Phase~2 only completes object $n$. 
Therefore,
$u_1(\mechWF(P)_1,\bm{p}_1)=1/n$, whereas every other agent has an overlap utility of at least $1/n$, and hence $\EG(\mechWF(P),P)=1/n$.

For the QP mechanism, the objective on the utilitarian-optimal set is
$\frac12\left(1+\frac1{n-1}\right)\left(a^2+(1-a)^2\right)$
up to an additive constant
independent of $a$.
This is uniquely minimized at $a=1/2$. 
Hence,
$u_1(\mechQP(P)_1,\bm{p}_1)=1/2$ and
$u_i(\mechQP(P)_i,\bm{p}_i)=1-1/(2(n-1)) \ge 1/2$ for $i\in\{2,\dots,n\}$.
It follows that
$\EG(\mechQP(P),P)=1/2$.
\end{proof}

For the other direction, we prove that the egalitarian overlap welfare of the QP mechanism can be less than that of the WF mechanism by at most a factor of $2$. 

\begin{theorem}
\label{thm:wf-qp-two-factor}
For every profile $P\in\Delta^N$, it holds that $\EG(\mechWF(P),P)\le 2\cdot \EG(\mechQP(P),P)$.
\end{theorem}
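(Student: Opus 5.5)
The plan is to compare the two allocations agent by agent, using the closed form of Phase~1 of WF and the KKT representation of \Cref{prop:opt-rep} for QP. Write $X^*\coloneqq\mechQP(P)$ and let $(\alpha^*_i)_{i\in N}$, $(\beta^*_j)_{j\in M}$ be the constants of \Cref{prop:opt-rep}. For $j\in M_>$, let $t_j$ be the water level defined by $\sum_{k\in N}\min\{p_{kj},t_j\}=1$.

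By \Cref{prop:wf-min-equals-phase1}, the WF utility of agent $i$ is
\[
u_i^{\mathrm{WF}}=\sum_{j\in M_\le}p_{ij}+\sum_{j\in M_>}\min\{p_{ij},t_j\}.
\]
By \Cref{prop:opt-rep}, the QP utility of agent $i$ is
\[
u_i^{\mathrm{QP}}=\sum_{j\in M_\le}p_{ij}+\sum_{j\in M_>}\min\{\alpha^*_i+\beta^*_j,p_{ij}\}.
\]
The terms over $M_\le$ coincide, so only the over-demanded columns need comparing.

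\textbf{Step 1: the agent with largest $\alpha^*$ does at least as well under QP as under WF.} Let $i^\star\in\arg\max_{k}\alpha^*_k$. Suppose, for contradiction, that some $j\in M_>$ has $x^*_{i^\star j}<\min\{p_{i^\star j},t_j\}$. Then $x^*_{i^\star j}=\alpha^*_{i^\star}+\beta^*_j<t_j$. For every $k$,
\[
x^*_{kj}\le\min\{\alpha^*_k+\beta^*_j,p_{kj}\}\le\min\{\alpha^*_{i^\star}+\beta^*_j,p_{kj}\}\le\min\{t_j,p_{kj}\},
\]
and the inequality is strict for $k=i^\star$. Summing over $k$ gives $1<1$, a contradiction. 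Hence $u_{i^\star}^{\mathrm{QP}}\ge u_{i^\star}^{\mathrm{WF}}\ge \EG(\mechWF(P),P)$.

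\textbf{Step 2 (the main obstacle): the QP-worst agent loses at most a factor $2$.} Let $i$ be the agent minimizing $u^{\mathrm{QP}}$. I will try to show directly that $u^{\mathrm{QP}}_i\ge\frac12 u^{\mathrm{WF}}_i$, which gives the theorem since $u^{\mathrm{WF}}_i\ge\EG(\mathrm{WF})$. It suffices to prove
\[
\sum_{j\in M_>}\bigl(\min\{p_{ij},t_j\}-x^*_{ij}\bigr)^+\le u_i^{\mathrm{QP}}.
\]
The deficit terms are nonzero only on columns $j\in M_>$ where $x^*_{ij}=\alpha^*_i+\beta^*_j<t_j$. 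By the same column-sum argument as in Step 1, every such column contains an agent $k$ with $x^*_{kj}>\min\{t_j,p_{kj}\}$, so $\alpha^*_k>\alpha^*_i$. Equivalently, QP transfers the mass $\min\{p_{ij},t_j\}-x^*_{ij}$ from $i$ to such agents. My first attempt is to charge each unit of this deficit to a unit of $i$'s QP utility, as follows.
\begin{itemize}
\item Since $\sum_j x^*_{ij}=1$, the deficit on $M_>$ is compensated by $i$'s excess on $M_<$.
\item By \Cref{lem:relation}, every entry of that excess satisfies $x^*_{ij'}\le x^*_{ij}$ for each under-allocated $j\in M_>$. This bounds the number and size of excess entries in terms of the retained amounts $x^*_{ij}$, which count toward $u_i^{\mathrm{QP}}$.
\item If this row-based charging falls short, I will use the first-order optimality of $X^*$ against the feasible point $\mechWF(P)\in\mathcal U(P)$, namely $\langle X^*,\mechWF(P)-X^*\rangle\ge0$, to relate the water levels $t_j$ to $\alpha^*_i+\beta^*_j$.
\end{itemize}

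\textbf{Fallback via envy-freeness.} If the direct per-agent bound fails, I will instead prove $u_i^{\mathrm{QP}}\ge\frac12 u_{i^\star}^{\mathrm{QP}}$ and combine it with Step 1. Envy-freeness (\Cref{prop:qp-ef}) gives $u_i^{\mathrm{QP}}\ge u_i(\bm{x}^*_{i^\star},\bm{p}_i)$, and I would then exploit the structure $x^*_{i^\star j}\ge x^*_{ij}$ on uncapped over-demanded columns. I expect the factor $2$ to arise from splitting the deficit into the part on $M_>$ and the matching excess on $M_<$, each bounded by $u_i^{\mathrm{QP}}$.
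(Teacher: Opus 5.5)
\textbf{There is a genuine gap: Step~2 carries the whole theorem, and you do not prove it.} Your Step~1 is correct. The target of Step~2 is also right: the QP-worst agent $i$, with $z\coloneqq u_i^{\mathrm{QP}}$, should satisfy $u_i^{\mathrm{WF}}\le 2z$, and this is what the paper proves. But Step~2 only lists candidate strategies. Your observation that a deficit column contains some $k$ with $\alpha^*_k>\alpha^*_i$ is qualitative; nothing bounds \emph{how far} the water level $t_j$ can exceed $x^*_{ij}=\alpha^*_i+\beta^*_j$.

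The missing idea is to bound the multiplier spread using the rows of \emph{all other} agents, restricted to the objects where $i$ is over-allocated. The paper's argument runs as follows.
\begin{itemize}
\item Assume $z<1$, normalize $\alpha^*_i=0$, and set $\widehat O\coloneqq\{j\in M_>\mid x^*_{ij}<p_{ij}\}$ and $\widehat U\coloneqq\{k\in M_<\mid x^*_{ik}>p_{ik}\}$.
\item The row identity of $i$ gives $\sum_{k\in\widehat U}\beta^*_k\ge 1-z$.
\item Hence, for every agent $r$, $1\ge\sum_{k\in\widehat U}x^*_{rk}\ge|\widehat U|\alpha^*_r+1-z$, that is, $\alpha^*_r\le z/|\widehat U|$.
\item Your Step~1 column-sum argument then gives $t_j\le\beta^*_j+z/|\widehat U|$. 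So each deficit term on $\widehat O$ is at most $z/|\widehat U|$, and there is no deficit outside $\widehat O$.
\item \Cref{lem:relation} gives $\beta^*_j\ge b\coloneqq\max_{k\in\widehat U}\beta^*_k$ for $j\in\widehat O$. Hence $z\ge|\widehat O|\,b$ and $1-z\le|\widehat U|\,b$.
\item The total deficit is therefore at most $|\widehat O|\,z/|\widehat U|\le z^2/(1-z)$. This is at most $z$ when $z\le 1/2$; for $z>1/2$ the trivial bound $u_i^{\mathrm{WF}}\le 1<2z$ finishes.
\end{itemize}
Your second bullet gestures at the counting step. Without the bound $\alpha^*_r\le z/|\widehat U|$, however, the charging cannot close. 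The variational inequality against $\mechWF(P)$ is never turned into an argument.

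\textbf{Drop the fallback: its intermediate claim is false.} The inequality $u_i^{\mathrm{QP}}\ge\frac12 u_{i^\star}^{\mathrm{QP}}$ fails in general, partly because the constants in \Cref{prop:opt-rep} are not unique. For $n\ge 4$, let agents $1,\dots,n-1$ have target $(1,0,\dots,0)$ and let agent $n$ be uniform on objects $2,\dots,n$.
\begin{itemize}
\item QP gives each agent $r<n$ the row $(\frac1{n-1},\frac{n-2}{(n-1)^2},\dots,\frac{n-2}{(n-1)^2})$, with utility $\frac1{n-1}$. Agent $n$ receives utility $1$.
\item The constants $\alpha^*_r=0$ for $r<n$, $\alpha^*_n=\frac1{(n-1)^2}$, $\beta^*_1=\frac1{n-1}$, and $\beta^*_j=\frac{n-2}{(n-1)^2}$ for $j\ge 2$ are valid in \Cref{prop:opt-rep}.
\item Then $i^\star=n$, and your claim would require $\frac1{n-1}\ge\frac12$, which is false.
\end{itemize}
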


\begin{proof}
Let $X^*\coloneqq\mechQP(P)$, let $z\coloneqq \EG(X^*,P)$, and choose
$i'\in N$ such that $u_{i'}(\bm{x}^*_{i'},\bm{p}_{i'})=z$.
If $z=1$, the claim is immediate. Assume therefore that $z<1$.

Let $(\alpha,\beta)$ be the KKT potentials from \Cref{prop:opt-rep}, with normalization\footnote{This normalization can be achieved by decreasing $\alpha_i$ by $\alpha_{i'}$ for all $i\in N$, and increasing $\beta_j$ by $\alpha_{i'}$ for all $j\in M$.}
$\alpha_{i'}=0$. Define $\widehat O\coloneqq \{j\in M_>\mid x^*_{i' j}<p_{i' j}\}$
and $\widehat U\coloneqq \{j\in M_<\mid x^*_{i' j}>p_{i' j}\}$.
Since $z<1$, agent $i'$ has a positive overlap deficit compared to the ideal overlap. 
By \Cref{lem:uo-criterion}, such a deficit can occur only on over-demanded objects,
and therefore $\widehat O\neq\emptyset$. 
Since the row sum of $\bm{x}^*_{i'}$ is $1$, this
deficit must be balanced by positive excess on under-demanded objects, and $\widehat U\neq\emptyset$ as well.

First, we claim that $\alpha_i\le z/|\widehat U|$ for every $i\in N$. Indeed, the row identity for
$i'$ gives
\begin{align*}
1=z+\sum_{k\in \widehat U}(x^*_{i' k}-p_{i' k})
=z+\sum_{k\in \widehat U}(\alpha_{i'}+\beta_k-p_{i' k})
=z+\sum_{k\in \widehat U}(\beta_k-p_{i' k}). 
\end{align*}
Hence,
$\sum_{k\in \widehat U}\beta_k=1-z+\sum_{k\in \widehat U}p_{i' k}\ge 1-z$.
For every $i\in N$, feasibility then gives
\begin{align*}
1
\ge \sum_{k\in \widehat U}x^*_{ik}
\ge \sum_{k\in \widehat U}(\alpha_i+\beta_k)
\ge |\widehat U|\alpha_i+1-z,
\end{align*}
which proves $\alpha_i\le z/|\widehat U|$.

For each $j\in M_>$, let $t_j$ be the water-filling threshold, that is,
$\sum_{i\in N}\min\{p_{ij},t_j\}=1$.
The bound on $\alpha_i$ implies that
$
x^*_{ij}
=\min\{p_{ij},\alpha_i+\beta_j\}
\le
\min\{p_{ij}, \, \beta_j+{z}/{|\widehat U|}\}$.
Summing over $i \in N$ and using $\sum_{i\in N} x^*_{ij}=1$, we obtain 
\begin{align*}
\sum_{i\in N} \min\{p_{ij}, t_j\} 
= 1
= \sum_{i\in N} x^*_{ij}
\le \sum_{i\in N}\min\{p_{ij}, \beta_j + z/|\widehat U|\}.
\end{align*}
Note that the function $f(\tau) \coloneqq \sum_{i\in N}\min\{p_{ij}, \tau\}$ is strictly increasing for $\tau \in [0, \max_{i\in N}p_{ij}]$.
Since $f(t_j)\le f(\beta_j+z/|\widehat U|)$, we have $t_j\le \beta_j+z/|\widehat U|$.

Next, we claim that $|\widehat O|/|\widehat U|\le z/(1-z)$.
Let $b\coloneqq\max_{k\in\widehat U}\beta_k$. 
By \Cref{lem:relation}, for every
$j\in\widehat O$ and $k\in\widehat U$, it holds that
$\beta_j=x^*_{i' j}\ge x^*_{i' k}=\beta_k$.
Therefore, $\beta_j\ge b$ for all $j\in\widehat O$, and
$z\ge \sum_{j\in\widehat O}x^*_{i' j}
=\sum_{j\in\widehat O}\beta_j\ge |\widehat O|b$. 
On the other hand,
$1-z=\sum_{k\in\widehat U}(x^*_{i' k}-p_{i' k})
=\sum_{k\in\widehat U}(\beta_k-p_{i' k})
\le \sum_{k\in\widehat U}\beta_k\le |\widehat U|b$.
Thus, we obtain $|\widehat O|/|\widehat U|\le z/(1-z)$.

Now, we compare the overlap utility of agent $i'$. 
For objects $j\in M_{\le}$, both mechanisms obtain
the full reported overlap $p_{i' j}$. 
Hence,
\begin{align*}
u_{i'}(\mechWF(P)_{i'},\bm{p}_{i'})-z
=
\sum_{j\in M_>}
\left(
\min\{p_{i' j},t_j\}
-
\min\{p_{i' j},\beta_j\}
\right).
\end{align*}
Terms with $j\in M_>\setminus\widehat O$ are nonpositive since $x^*_{i'j} = p_{i'j}$, which means that $\min\{p_{i'j},\beta_j\} = p_{i'j}$. 
For $j\in\widehat O$, we have
$x^*_{i'j} = \beta_j<p_{i' j}$, and hence
\begin{align*}
\min\{p_{i' j},t_j\}-\min\{p_{i' j},\beta_j\}
=\min\{p_{i' j},t_j\}-\beta_j
\le t_j - \beta_j
\le {z}/{|\widehat U|}.
\end{align*}
Therefore, we have
\begin{align*}
u_{i'}(\mechWF(P)_{i'},\bm{p}_{i'})-z
\le |\widehat O|\cdot {z}/{|\widehat U|}
\le z^2/(1-z).
\end{align*}
It follows that
$\EG(\mechWF(P),P)\le u_{i'}(\mechWF(P)_{i'},\bm{p}_{i'})\le z + z^2/(1-z) = z/(1-z)$.
Since overlap utilities are always at most~$1$, we have
\begin{align*}
\EG(\mechWF(P),P)\le \min\left\{1,\,z/(1-z)\right\}\le 2z
=2\cdot\EG(\mechQP(P),P),
\end{align*}
where the second inequality holds because $1\le2z$ if $z\ge1/2$, and 
$z/(1-z)\le2z$ if $z<1/2$.
\end{proof}

Despite the approximation provided by \Cref{thm:wf-qp-two-factor}, as the following example shows, the QP mechanism does not dominate the WF mechanism in terms of egalitarian overlap welfare.
\begin{example}
\label{ex:qp-wf-example}
Let $n=3$ and $N = M = \{1,2,3\}$, and consider the profile
\begin{align*}
P =
\begin{pmatrix}
0 & 2/5 & 3/5\\
2/5 & 2/5 & 1/5\\
2/5 & 2/5 & 1/5
\end{pmatrix}.
\end{align*}
Here, object $1$ is under-demanded, object $2$ is over-demanded, and object $3$ is exactly demanded.
Then, the WF mechanism and the QP mechanism respectively give
\begin{align*}
\mechWF(P)=
\begin{pmatrix}
1/15 & 1/3 & 3/5\\
7/15 & 1/3 & 1/5\\
7/15 & 1/3 & 1/5
\end{pmatrix}
\quad\text{and}\quad
\mechQP(P)=
\begin{pmatrix}
1/5 & 1/5 & 3/5\\
2/5 & 2/5 & 1/5\\
2/5 & 2/5 & 1/5
\end{pmatrix}.
\end{align*}
Thus, we have
${\EG(\mechWF(P),P)}/{\EG(\mechQP(P),P)}
={(14/15)}/{(4/5)}
={7}/{6}$.
\end{example}

Determining the smallest constant $c$ such that $\EG(\mechWF(P),P)\le c\cdot \EG(\mechQP(P),P)$ always holds remains an open question.
\Cref{thm:wf-qp-two-factor} and \Cref{ex:qp-wf-example} show that $7/6 \le c \le 2$.

\section{Conclusion and Future Work}

In this paper, we have investigated a fractional assignment setting with an equal number of agents and objects.
Unlike in the classic random assignment model, agents' ideal distributions in our setting need not be degenerate, but may instead be mixed over multiple objects.
The agents evaluate other distributions according to the $\ell_1$ distance---equivalently, the overlap utility---from their ideal distributions.
We propose two mechanisms that are utilitarian-optimal, envy-free, strategyproof, and satisfy equal treatment of equals, and demonstrate an advantage of each mechanism.

A natural direction for future work is to deepen our understanding of the class of utilitarian-optimal, envy-free, and strategyproof mechanisms satisfying equal treatment of equals.
In particular, it would be interesting to determine whether our proposed mechanisms are optimal in some sense within this class.
More broadly, our fractional assignment setting could be extended to other distance measures, such as $\ell_p$ distances for $p > 1$.
Since these measures do not admit a simple equivalent overlap representation, the structures of desirable mechanisms may differ significantly from those in our setting.

\section*{AI Declaration}

The authors used ChatGPT and Codex (OpenAI) to assist in developing the proof of \Cref{prop:qp-sp}. In particular, these tools helped us identify the tangent program approach used in the proof. We verified the final argument and take full responsibility for its correctness.

\section*{Acknowledgments}

This work was partially supported by 
JST ERATO Grant Number JPMJER2301, 
by JST CRONOS Grant Number JPMJCS24K2, 
by JSPS KAKENHI Grant Numbers JP25K00137,
JP21K17708, JP21H03397, JP26K14706, JP26K14718, and 
JP26K02867,
Japan, 
by the Ministry of Education, Singapore, under the Academic Research Fund Tier 1 (FY 2026) grant number 251RES2604, and by an NUS Start-up Grant.  
We thank the anonymous reviewers for their valuable feedback.

\bibliographystyle{plainnat}
\bibliography{main}

\appendix

\section{Comparison to the Work of \citet{Ramezanian24}}
\label{app:Ramezanian}

In this appendix, we provide a detailed comparison between our work and that of \citet{Ramezanian24}.

First, we show that our WF mechanism does not belong to \citeauthor{Ramezanian24}'s class of URC mechanisms.
To see this, consider a profile with $n = 3$ agents who all prefer the degenerate distribution $(1,0,0)$.
On this profile, URC mechanisms return
\begin{align}
\begin{pmatrix}
1/3&2/3  &0\\
1/3&1/3&1/3\\
1/3&0&2/3
\end{pmatrix}
\end{align}
or one of its row-permutations, due to the sequential nature of their second stage.
On the other hand, WF returns the uniform matrix 
\begin{align}
\begin{pmatrix}
1/3&1/3  &1/3\\
1/3&1/3&1/3\\
1/3&1/3&1/3
\end{pmatrix}.
\end{align}
Intuitively, the output of WF appears to be more appropriate in light of the symmetry among agents. 
More formally, URC fails equal treatment of equals, which WF satisfies.

Next, we demonstrate that our results answer three open questions posed by \citet{Ramezanian24}.\footnote{For definitions of properties not defined in our paper, we refer to \citeauthor{Ramezanian24}'s work.}
\begin{itemize}
\item \citet[pp.~27--28]{Ramezanian24} asked whether there exists a mechanism that is strategyproof, Pareto efficient, and satisfies either envy-freeness or anonymity, yet is not welfare-equivalent to URC mechanisms. 
Our QP mechanism provides such an example---note that QP satisfies both anonymity and envy-freeness. The fact that QP is not welfare-equivalent to URC can be seen from Examples 2 and 3 or Section 5 (note that WF is welfare-equivalent to URC).
\item \citet[p.~26]{Ramezanian24} asked whether there exists a mechanism that is strategyproof and Pareto efficient but does not satisfy the in-betweenness property.
Our WF mechanism provides such an example. Specifically, consider a profile~$P$ with $n = 3$ where the first two agents prefer $(0,0,1)$ and the third agent prefers $(0,1/2,1/2)$.
On this profile, WF returns 
\begin{align}
\mechWF(P)=\begin{pmatrix}
11/24&5/24  &1/3\\
11/24&5/24  &1/3\\
1/12&7/12&1/3
\end{pmatrix}.
\end{align}
On the other hand, consider the profile~$P'$ where the first agent instead prefers $(1/3, 1/6, 1/2)$, which is ``between'' $(0,0,1)$ and $(11/24,5/24,1/3)$.
On this profile, WF returns 
\begin{align}
\mechWF(P')=\begin{pmatrix}
5/12&1/4  &1/3\\
1/2&1/6  &1/3\\
1/12&7/12&1/3
\end{pmatrix}.
\end{align}
Since $5/12 < 11/24$, the first agent receives less of the first object even though she demands more.
\item \citet[p.~26]{Ramezanian24} asked whether there exists a mechanism that is strategyproof and Pareto efficient but not non-bossy.\footnote{Note that Ramezanian’s non-bossiness notion is somewhat unusual, as it only considers over-demanded objects.}
Our QP mechanism again provides such an example. 
Specifically, consider a profile~$P$ with $n = 3$ where the three agents prefer $(0, 0.6, 0.4)$, $(0.6, 0.4, 0)$, and $(0.2, 0.2, 0.6)$, respectively.
On this profile, QP returns
\begin{align}
\mechQP(P)=\begin{pmatrix}
0.2&0.4  &0.4\\
0.6&0.4  &0\\
0.2&0.2& 0.6
\end{pmatrix}.
\end{align}
On the other hand, consider the profile~$P'$ where the third agent instead prefers $(0.4, 0.2, 0.4)$.
On this profile, QP returns
\begin{align}
\mechQP(P')=\begin{pmatrix}
0&0.55  &0.45\\
0.6&0.25  &0.15\\
0.4&0.2& 0.4
\end{pmatrix}.
\end{align}
Hence, although the third agent receives the same amount of over-demanded object (i.e., the second object) as before, the other agents receive different amounts of this object than before.
\end{itemize}

\section{The QP Mechanism Is Not Weakly Group-Strategyproof}
\label{sec:weak-gsp}

In this appendix, we show that the QP mechanism fails weak group-strategyproofness (and therefore group-strategyproofness).

\begin{proposition}
\label{prop:weak-gsp-counterexample}
The mechanism $\mechQP$ is not weakly group-strategyproof when $n=4$.
\end{proposition}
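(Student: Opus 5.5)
The plan is to exhibit an explicit counterexample with $n=4$: a profile $P$, a coalition $S$ (I would try $|S|=2$), and a joint misreport $\bm{q}_S$ such that both members of $S$ strictly gain in overlap utility under $\mechQP(\bm{q}_S,\bm{p}_{-S})$ compared with $\mechQP(P)$. Since \Cref{prop:qp-sp} rules out unilateral deviations, the deviation has to exploit the coupling through the KKT potentials of \Cref{prop:opt-rep}. A misreport by one coalition member should shift $\beta_j$ on some over-demanded column, and this shift should benefit the other member. The second misreport then compensates the first member.

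Concretely, I would take a structure where two coalition agents $1,2$ compete with outsiders $3,4$ for over-demanded objects. In the intended instance, agent~1 wants object $a\in M_>$ and agent~2 wants object $b\in M_>$, while $M_<$ contains a dump object. Each coalition member misreports by shifting some mass from her preferred over-demanded object onto an object the other member does not care about. This changes which constraints are active and lowers the $\ell_2$ tie-breaking pressure on the partner's column. By \Cref{lem:uo-criterion}, the utilitarian-optimal set remains a low-dimensional polytope, and with symmetric choices (agents $3,4$ identical and the coalition roles symmetric) the QP reduces to minimizing a one- or two-parameter quadratic, as in \Cref{ex:QP} and the proof of \Cref{prop:wf-linearly-worse}. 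I will compute $\mechQP(P)$ and $\mechQP(Q)$ this way, imposing symmetry via \Cref{prop:equal-treatment} and uniqueness.

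For each of the two profiles, the verification steps are as follows. First, classify the columns into $M_>,M_=,M_<$ and parametrize $\mathcal U(\cdot)$ using \eqref{eq:mathcalU}. Next, minimize the quadratic, checking that the minimizer is interior to the parameter ranges or else clamping it at the boundary. Then certify optimality by exhibiting potentials $(\alpha,\beta)$ satisfying \eqref{eq:opt-sol}, or by convexity along the parametrized segment. Finally, evaluate $u_i(\cdot,\bm{p}_i)$ with respect to the \emph{true} $\bm{p}_i$ for $i\in S$ and confirm that both inequalities are strict.

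The main obstacle is finding numbers that actually work, because strategyproofness forces each member's own misreport to be individually unprofitable. The gain must therefore come almost entirely from the partner's misreport, while the member's own loss from misreporting stays smaller. I would first search small rational instances numerically, for example with entries in multiples of $1/10$ and agents $3,4$ identical, and then round to a clean instance whose two QP solutions can be verified by hand via the KKT representation.
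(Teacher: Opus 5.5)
You have written a search plan rather than a proof. The statement is existential, so its whole content is an explicit witness: a profile $P$, a coalition $S$, a joint misreport $\bm{q}_S$, certified values of $\mechQP(P)$ and $\mechQP(\bm{q}_S,\bm{p}_{-S})$, and the true-target disutilities of the members of $S$. Your verification recipe is fine. Exhibiting potentials $(\alpha,\beta)$ for which the matrix given by the representation in \Cref{prop:opt-rep} is doubly stochastic is a valid certificate: setting $\lambda_{ij}=\max\{\alpha_i+\beta_j-p_{ij},0\}$ and $\mu_{ij}=\max\{p_{ij}-\alpha_i-\beta_j,0\}$ gives a KKT point of the strictly convex program (QP). However, you explicitly leave open the only nontrivial step (``the main obstacle is finding numbers that actually work''), so nothing is established. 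The heuristic offered in its place is also unsupported. You do not argue why shifting mass off one's own over-demanded object would lower ``tie-breaking pressure'' on the partner's column enough to produce strict gains for both members. Nothing guarantees that a counterexample with identical outsiders and mirror-symmetric coalition members exists, so your reduction to a one- or two-parameter quadratic may not be available.

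For comparison, the paper's witness is asymmetric ($n=4$, $S=\{1,4\}$) and relies on a threshold effect your sketch does not mention. Object $4$ is over-demanded under $P$, with total demand $6/5$. Each member's misreport alone would leave its total above $1$ (namely $11/10$ and $16/15$), but jointly they bring it to $29/30$, which moves object $4$ into $M_<$. Agent $1$ lowers her claim on object $4$ from $3/5$ to $1/2$, yet receives $1/2$ of it instead of $2/5$, because the constraint is now $x'_{14}\ge q_{14}$. Agent $4$ abandons her $2/15$ claim on object $4$ and inflates her claim on object $2$ to $1$. She loses $1/10$ of overlap on object $4$ but gains $8/75$ on object $2$, so her disutility drops by only $1/75$. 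Both outputs are then certified by explicit $(\alpha,\beta)$ exactly as above. The narrow margin shows that the instance itself is the substance of the proof. To complete your argument you must actually produce and certify one.
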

\begin{proof}
Let $N=M=\{1,2,3,4\}$, and consider the profile $P\in\Delta^N$ given by
\begin{align*}
P\coloneqq
\frac1{15}
\begin{pmatrix}
2 & 2 & 2 & 9\\
2 & 4 & 6 & 3\\
0 & 8 & 3 & 4\\
6 & 6 & 1 & 2
\end{pmatrix}.
\end{align*}
Let $S\coloneqq\{1,4\}$, and define the manipulated profile $Q\in\Delta^N$ by
\begin{align*}
Q\coloneqq
\begin{pmatrix}
0 & \frac14 & \frac14 & \frac12\\
\frac{2}{15} & \frac{4}{15} & \frac25 & \frac15\\
0 & \frac{8}{15} & \frac15 & \frac{4}{15}\\
0 & 1 & 0 & 0
\end{pmatrix}.
\end{align*}
Thus, $Q=(\bm{q}_S,\bm{p}_{-S})$ with $\bm{q}_1=(0,\frac14,\frac14,\frac12)$ and $\bm{q}_4=(0,1,0,0)$.
Let $X\coloneqq\mechQP(P)$ and $X'\coloneqq\mechQP(Q)$.
Then,
\begin{align*}
X=
\frac{1}{75}
\begin{pmatrix}
18 & 10 & 17 & 30\\
11 & 19 & 30 & 15\\
16 & 24 & 15 & 20\\
30 & 22 & 13 & 10
\end{pmatrix},
\qquad
X'=
\frac{1}{120}
\begin{pmatrix}
15 & 15 & 30 & 60\\
24 & 24 & 48 & 24\\
32 & 32 & 24 & 32\\
49 & 49 & 18 & 4
\end{pmatrix}.
\end{align*}

We certify that $X=\mechQP(P)$ and $X'=\mechQP(Q)$ by exhibiting KKT certificates for problem $({\rm QP})$.
For $P$, the column sums are $(\frac23,\frac43,\frac45,\frac65)$, and therefore
$M_<=\{1,3\}$ and $M_>=\{2,4\}$.
Let
\begin{align*}
(\alpha_1,\alpha_2,\alpha_3,\alpha_4)&\coloneqq\left(\frac{18}{75},\frac{11}{75},\frac{16}{75},\frac{14}{75}\right),\\
(\beta_1,\beta_2,\beta_3,\beta_4)&\coloneqq\left(0,\frac{8}{75},-\frac{1}{75},\frac{12}{75}\right),
\end{align*}
and define $X$ from $\alpha,\beta$ via \Cref{prop:opt-rep}.
Then, the displayed matrix $X$ is obtained.
Define multipliers by
\begin{align*}
\lambda_{ij}&\coloneqq \max\{\alpha_i+\beta_j-p_{ij},\,0\}\qquad (j\in M_\ge),\\
\mu_{ij}&\coloneqq \max\{p_{ij}-\alpha_i-\beta_j,\,0\}\qquad (j\in M_\le),
\end{align*}
and set $\lambda_{ij}=0$ for $j\notin M_\ge$ and $\mu_{ij}=0$ for $j\notin M_\le$.
Then, $(X,\alpha,\beta,\lambda,\mu)$ satisfies the KKT conditions (a)--(h) for $({\rm QP})$ at $P$.
Since $({\rm QP})$ is strictly convex, $X$ is its unique optimizer, and therefore $X=\mechQP(P)$.

For $Q$, the column sums are $(\frac{2}{15},\frac{41}{20},\frac{17}{20},\frac{29}{30})$, and therefore
$M_<=\{1,3,4\}$ and $M_>=\{2\}$.
Let
\begin{align*}
(\alpha_1,\alpha_2,\alpha_3,\alpha_4)&\coloneqq\left(\frac{15}{120},\frac{24}{120},\frac{32}{120},\frac{49}{120}\right),\\
(\beta_1,\beta_2,\beta_3,\beta_4)&\coloneqq\left(0,0,-\frac{31}{120},-\frac{45}{120}\right),
\end{align*}
define $X'$ from $\alpha,\beta$ via \Cref{prop:opt-rep}, and define $\lambda,\mu$ as above.
Then, the displayed matrix $X'$ is obtained, and the same KKT verification shows that $X'=\mechQP(Q)$.

Finally, observe that
\begin{align*}
\du(\bm{x}_1,\bm{p}_1)=\frac25,
\qquad
\du(\bm{x}_4,\bm{p}_4)=\frac{16}{75},
\qquad
\du(\bm{x}'_1,\bm{p}_1)=\frac{7}{30},
\qquad
\du(\bm{x}'_4,\bm{p}_4)=\frac15.
\end{align*}
Thus, both members of $S$ strictly improve through their manipulation, thereby violating weak group-strategyproofness.
\end{proof}

\section{Proof of \Cref{prop:qp-sp}}
\label{app:prop-qp-sp}

Fix a profile $P\in\Delta^N$ and an agent $i\in N$.
For a report $\bm{q}_i\in\Delta$, define the loss
\begin{align*}
F_{P,i}(\bm{q}_i)\coloneqq\du\!\left(\mechQP(\bm{q}_i,\bm{p}_{-i})_i,\,\bm{p}_i\right).
\end{align*}
It suffices to show that $F_{P,i}(\bm{q}_i)\ge F_{P,i}(\bm{p}_i)$ for every report $\bm{q}_i\in\Delta$.
The claim is immediate when $\bm{q}_i=\bm{p}_i$, so we fix $\bm{q}_i\in\Delta\setminus\{\bm{p}_i\}$ from now on.

Set $\bm{d}\coloneqq \bm{q}_i-\bm{p}_i$, and note that $\bm{p}_i+\tau \bm{d}\in\Delta$ for all $\tau\in[0,1]$.
For $\tau\in[0,1]$, define
\begin{align*}
\bm{q}_i(\tau)&\coloneqq \bm{p}_i+\tau \bm{d},\\
X(\tau)&\coloneqq \mechQP(\bm{q}_i(\tau),\bm{p}_{-i}),\\
f(\tau)&\coloneqq F_{P,i}(\bm{q}_i(\tau))=\du\!\left(\bm{x}_i(\tau),\,\bm{p}_i\right).
\end{align*}
Our goal is to prove the endpoint inequality $f(1)\ge f(0)$.

\subsection{Ray structure}
\label{subsec:sp-proof-a1-a2}

The first step is to reduce the manipulation analysis to finitely many affine intervals along the
line segment from the truthful report to the deviating report.
\begin{lemma}
\label{lem:sp-ray-structure}
There exist an integer $K\ge0$ and breakpoints
\begin{align*}
0=\tau_0<\tau_1<\cdots<\tau_K<\tau_{K+1}=1
\end{align*}
such that for each half-open interval $I_\ell\coloneqq[\tau_\ell,\tau_{\ell+1})$ with $\ell\in\{0,1,\dots,K\}$, letting $\bar\tau_\ell\coloneqq(\tau_\ell+\tau_{\ell+1})/2$, the following hold:
\begin{itemize}
\item[(i)] For every $r\in N$ and $j\in M$, there exist $x_{rj}^{(\ell,1)},x_{rj}^{(\ell,0)}\in\mathbb R$ such that
\begin{align}
x_{rj}(\tau)=x_{rj}^{(\ell,1)}\cdot(\tau-\bar\tau_\ell)+x_{rj}^{(\ell,0)}
\qquad(\tau\in I_\ell). 
\label{eq:sp-ray-structure-1}
\end{align}
\item[(ii)] There exist $a^{(\ell)},b^{(\ell)}\in\mathbb R$ such that
\begin{align*}
f(\tau)=a^{(\ell)}\cdot(\tau-\bar\tau_\ell)+b^{(\ell)}
\qquad(\tau\in I_\ell).
\end{align*}
\item[(iii)] The map $X(\tau)$ is continuous on $[0,1]$.
\item[(iv)] On $I_\ell$, the direction of every coordinate inequality in~{\rm(QP)} (i.e., inequalities either upper-bounding or lower-bounding $x_{rj}(\tau)$) is fixed, and
each corresponding slack is either strictly positive throughout
$(\tau_\ell,\tau_{\ell+1})$ or identically zero on $I_\ell$.
\end{itemize}
\end{lemma}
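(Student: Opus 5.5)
The plan is to build the breakpoints in three successive refinements, each adding only finitely many points. Continuity is handled separately at the end.

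\emph{Step 1: fix the classification of objects.} The profile $(\bm{q}_i(\tau),\bm{p}_{-i})$ has column sums $c_j(\tau)=\sum_{r\ne i}p_{rj}+q_{ij}(\tau)$, which are affine in $\tau$. So each $c_j(\tau)-1$ is either identically zero or has at most one root in $[0,1]$. Adding these roots as breakpoints, the sets $M_>$, $M_=$, $M_<$ are constant on each open subinterval. Hence the direction of every coordinate inequality in (QP) is fixed there, which is the first half of (iv).

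\emph{Step 2: on such an interval, $X(\tau)$ is piecewise affine with finitely many pieces.} Here the feasible region of (QP) is a polytope with fixed constraint matrix. The right-hand sides are the entries $q_{rj}(\tau)$, which are affine in $\tau$. For each of the finitely many subsets $A$ of coordinate constraints, consider the KKT system (a)--(h) in which the constraints in $A$ are imposed as equalities and the remaining multipliers are set to zero. In the variables $(\tau,X,\alpha,\beta,\lambda,\mu)$ this is a system of linear equalities and inequalities, namely feasibility of the other constraints and nonnegativity of the multipliers. Its projection onto the $\tau$-axis is therefore a closed interval $J_A$.
\begin{itemize}
\item On $J_A$, the $X$-part of any solution is the unique optimizer $X(\tau)$, by strict convexity.
\item $X(\tau)$ also equals the minimum-norm point of the affine subspace cut out by the row and column sums and the equalities in $A$. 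That point is an affine function of the right-hand side, hence affine in $\tau$.
\item Every $\tau$ lies in some $J_A$: take $A$ to be the active set at the optimum and invoke the existence of KKT multipliers used in \Cref{prop:opt-rep}.
\end{itemize}
So $[0,1]$ is covered by finitely many closed intervals on each of which $X$ is affine. Adding all their endpoints as breakpoints gives (i) on each open piece.

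\emph{Step 3: refine for (ii) and (iv).} On each piece, every slack $x_{rj}(\tau)-q_{rj}(\tau)$ and every difference $x_{ij}(\tau)-p_{ij}$ is affine. Each is therefore either identically zero or has at most one root, and we add those roots. Then all signs are constant on each open piece. This gives the "strictly positive or identically zero" dichotomy in (iv), and makes $f(\tau)=\sum_j|x_{ij}(\tau)-p_{ij}|$ affine there, which is (ii).

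\emph{Step 4: continuity (iii), and closing the intervals on the left.} This is the step I expect to be the main obstacle. At a Step 1 breakpoint the constraint system itself changes, so the argument of Step 2 does not apply across it. I would use Berge's maximum theorem: a strictly convex objective minimized over a continuous, compact-valued correspondence has a continuous unique minimizer.
\begin{itemize}
\item Upper hemicontinuity of $\tau\mapsto\mathcal U$ follows from the closed description \eqref{eq:mathcalU}, together with the observation that near a crossing $c_j(\tau)\to1$ the column-$j$ constraints squeeze $x_{rj}$ to within $|c_j(\tau)-1|$ of $q_{rj}(\tau)$.
\item Lower hemicontinuity: I would show that the explicit allocation constructed in the proof of \Cref{lem:uo-criterion} depends continuously on $\tau$, and use it to correct nearby points into feasibility. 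Failing that, I would use a Hoffman-type error bound for the system at each fixed classification.
\end{itemize}
Once $X$ is continuous, (i) extends from the open piece $(\tau_\ell,\tau_{\ell+1})$ to the left endpoint $\tau_\ell$, so it holds on the half-open interval $I_\ell$. The same limiting argument extends (ii), and the identically-zero slacks in (iv), to $\tau_\ell$.
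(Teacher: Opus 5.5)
Your proposal is correct in substance, and Steps~1 and~3 match the paper's. The difference lies in how piecewise affinity and continuity are obtained.

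\emph{Paper's route.} The paper fixes the inequality directions on each $[t_\ell,t_{\ell+1})$, using the remark that at a crossing $c_j(\tau)=1$ either direction, together with the column-sum equality, forces $x_{rj}=q_{rj}(\tau)$. It then simply cites the parametric-QP literature (critical regions) for both piecewise affinity and continuity.

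\emph{Your route.} Step~2 proves piecewise affinity from scratch: enumerate active sets, project the KKT polyhedron onto the $\tau$-axis, and identify $X(\tau)$ on $J_A$ with the minimum-norm point $L_A^{+}b_A(\tau)$. This is sound and self-contained; it is essentially a proof of the result the paper cites.

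What you do not exploit is that this same argument also yields continuity, so Step~4 is unnecessary. As written, that step is also unfinished. Of your two options for lower hemicontinuity, only the Hoffman-bound one is an actual argument. Even that one needs the fixed-direction system to remain valid at the breakpoint itself.

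The simpler fix is to run Step~2 on the \emph{closed} intervals $[t_\ell,t_{\ell+1}]$, with the directions of the open interval. This is legitimate on the whole closed interval:
\begin{itemize}
\item At an endpoint with $c_j=1$, your own squeeze observation applies: the single direction plus the column sum already forces $x_{rj}=q_{rj}$.
\item For every other column, the sign of $c_j-1$ persists to the endpoint.
\end{itemize}
So this fixed-direction QP has optimizer $X(\tau)$ on the whole closed interval. Consequently:
\begin{itemize}
\item Each $J_A$ is a genuinely closed interval.
\item Every closed piece $[a,b]$ between consecutive endpoints lies inside a single $J_A$, so $X$ is affine on $[a,b]$ itself.
\item Continuity on $[0,1]$ follows at once, since adjacent closed pieces share an endpoint at which $X$ takes its unique value.
\end{itemize}
This gives (i) on the half-open $I_\ell$ directly. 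The same remark also supplies the one part of (iv) you leave implicit: the direction is fixed at $\tau_\ell$ as well, even though there the column may lie in $M_=$ and (QP) formally contains both directions.
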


\begin{proof}
Consider the reported profile $Q(\tau)\in\Delta^N$ given by $\bm{q}_i(\tau)=\bm{p}_i+\tau \bm{d}$ and
$\bm{q}_r(\tau)=\bm{p}_r$ for $r\neq i$.
For each column $j\in M$, the quantity
$s_j(\tau)\coloneqq \sum_{r\in N}q_{rj}(\tau)-1$ is affine in $\tau$, so it changes sign at most once on $[0,1]$.
Define $\kappa_j\in[0,1]$ as follows: if $d_j\neq 0$ and there exists $\tau\in[0,1]$ such that $\sum_{r\in N}q_{rj}(\tau)=1$, let $\kappa_j$ be that unique $\tau$; otherwise set $\kappa_j\coloneqq 1$.
Let $0=t_0<t_1<\cdots<t_L=1$ be the sorted list of distinct values in $\{0,1\}\cup\{\kappa_j\mid j\in M\}$.
Then, on each interval $[t_\ell,t_{\ell+1})$, for every $j\in M$, either $\sum_{r\in N}q_{rj}(\tau)\ge1$ holds for all $\tau\in[t_\ell,t_{\ell+1})$ or $\sum_{r\in N}q_{rj}(\tau)\le1$ holds for all $\tau\in[t_\ell,t_{\ell+1})$.
At points where $\sum_{r\in N}q_{rj}(\tau)=1$, the inequality direction does not matter: combined with the column-sum constraint $\sum_{r\in N}x_{rj}=1$, it forces $x_{rj}=q_{rj}(\tau)$ for every $r\in N$.
Therefore, on each interval $[t_\ell,t_{\ell+1})$, we may fix the inequality direction for each column $j$, and the feasible set of~{\rm(QP)} becomes polyhedral with affine dependence on $\tau$.

On each such interval, the mechanism outcome $X(\tau)$ is the unique optimizer of a strictly convex quadratic program with polyhedral constraints and affine dependence on $\tau$.
Standard results on parametric strictly convex quadratic programs imply that the optimizer map $X(\tau)$ is continuous and piecewise affine, and admits a finite refinement into critical regions on which $X(\tau)$ is affine (see \citet[Ch.~4]{BonnansShapiro00} or \citet[Ch.~6]{BorrelliBemporadMorari17}).
Collecting the critical-region refinements over all the intervals yields a finite partition $\{I'_\ell\}_{\ell=0}^{K'}$ of $[0,1]$ on which each coordinate $x_{rj}(\tau)$ is affine, proving (i).
Since $X(\tau)$ is continuous on $[0,1]$, (iii) is also established.

For each $j\in M$, the difference $x_{ij}(\tau)-p_{ij}$ is affine on each $I'_\ell$, so it changes sign at most once on $I'_\ell$.
For each coordinate inequality in~{\rm(QP)}, define its slack on an interval with fixed inequality
direction by $q_{rj}(\tau)-x_{rj}(\tau)$ for an upper constraint and by
$x_{rj}(\tau)-q_{rj}(\tau)$ for a lower constraint. Each such slack is affine and nonnegative on
the interval. We refine the partition at the finitely many zeros of the nonzero affine slack
functions and at the finitely many points where $x_{ij}(\tau)=p_{ij}$. After removing the point
$1$ and intervals of length $0$, we obtain a subpartition
$\{[\tau_\ell,\tau_{\ell+1})\}_{\ell\in\{0,\dots,K\}}$ of $[0,1)$ where $\tau_0=0$.
On each refined interval $I_\ell \coloneqq [\tau_\ell,\tau_{\ell+1})$, for all $\tau\in(\tau_\ell,\tau_{\ell+1})$, the value of $x_{ij}(\tau)-p_{ij}$ is either always positive, always negative, or zero.
On such an interval, we have $f(\tau)=\du(\bm{x}_i(\tau),\bm{p}_i)=\sum_{j\in M}|x_{ij}(\tau)-p_{ij}|$,
and each absolute value is an affine function of $\tau$ with fixed sign.
Hence, $f(\tau)$ is affine on each interval, proving (ii).
The same refinement proves~(iv), and (i) and (iii) are inherited from the critical-region
partition.
\end{proof}

\subsection{Endpoint inequality from interval slopes}
\label{subsec:sp-proof-a3}

In this subsection, we use the terminology from \Cref{lem:sp-ray-structure}.
For $\ell\in\{0,1,\dots,K\}$, let $I_\ell = [\tau_\ell, \tau_{\ell+1})$ be the interval indicated by \Cref{lem:sp-ray-structure}, and let $\bar\tau_\ell\coloneqq(\tau_\ell+\tau_{\ell+1})/2$.
The function $f$ is continuous on $[0,1]$ and piecewise affine on $[0,1)$.
By continuity, we also have $f(1) = a^{(K)}\cdot(1-\bar\tau_{K})+b^{(K)}$, and 
it follows that $f(\tau_{\ell+1}) - f(\tau_{\ell}) = a^{(\ell)}\cdot(\tau_{\ell+1}-\tau_{\ell})$ for each $\ell$.
The telescoping sum implies 
\begin{align*}
f(1)-f(0)=\sum_{\ell=0}^{K}a^{(\ell)}\cdot(\tau_{\ell+1}-\tau_\ell).
\end{align*}
Hence, it suffices to show that $a^{(\ell)}\ge0$ for every $\ell$.

Fix an interval $I_\ell=[\tau_\ell,\tau_{\ell+1})$.
By construction of the sign blocks in the proof of \Cref{lem:sp-ray-structure}, for each column $j\in M$, the constraints in~{\rm(QP)} include either $x_{rj}\le q_{rj}(\tau)$ for all $r\in N$ and $\tau\in I_\ell$, or $x_{rj}\ge q_{rj}(\tau)$ for all $r\in N$ and $\tau\in I_\ell$.
The corresponding slack is affine and nonnegative on~$I_\ell$. If such a slack is zero at the
interior point $\bar\tau_\ell$, then affine nonnegativity forces the slack to be identically zero
on $I_\ell$. Otherwise, the slack is positive at $\bar\tau_\ell$.

Define the ``frozen set''
\begin{align*}
\mathcal F^{(\ell)}\coloneqq\{(r,j)\in N\times M\mid x_{rj}(\tau)=q_{rj}(\tau)\ \text{for all }\tau\in I_\ell\}.
\end{align*}
Since $X(\tau)$ is affine on $I_\ell$, it is differentiable on $(\tau_\ell,\tau_{\ell+1})$ with a constant derivative, which equals the slope matrix $X^{(\ell,1)}\coloneqq(x_{rj}^{(\ell,1)})_{r\in N,\,j\in M}$.
Moreover, every inequality constraint in~{\rm(QP)} that is tight at $\tau=\bar\tau_\ell$ is encoded by $\mathcal F^{(\ell)}$, and every inequality constraint on $(N\times M)\setminus\mathcal F^{(\ell)}$ is slack at $\tau=\bar\tau_\ell$.
We see that the slope matrix $X^{(\ell,1)}$ is characterized as the unique minimizer of the tangent program below:
\begin{alignat}{2}
({\rm T}_\ell)\quad
\min\ \ &\textstyle\frac{1}{2}\sum_{r\in N}\sum_{j\in M}y_{rj}^2\\
\text{s.t.}\ \ 
&\textstyle\sum_{j\in M}y_{rj}=0  &\quad& (r\in N),\notag\\
&\textstyle\sum_{r\in N}y_{rj}=0  &     & (j\in M),\notag\\
&y_{rj}=d_j             &     & ((r,j)\in\mathcal{F}^{(\ell)},\ r=i),\notag\\
&y_{rj}=0               &     & ((r,j)\in\mathcal{F}^{(\ell)},\ r\neq i).\notag
\end{alignat}
Since the feasible set is a nonempty affine set and the objective is strictly convex, the minimizer is unique. 

We now justify the tangent program $({\rm T}_\ell)$. 
Let $\tau\in(\tau_\ell,\tau_{\ell+1})$.
The active Lagrangian of (QP) is
\begin{align*}
\frac{1}{2}\sum_{r\in N}\sum_{j\in M}x_{rj}^2
+\sum_{r\in N}\alpha_r\left(1-\sum_{j\in M}x_{rj}\right)
+\sum_{j\in M}\beta_j\left(1-\sum_{r\in N}x_{rj}\right)
+\sum_{(r,j)\in\mathcal F^{(\ell)}}\nu_{rj}
\bigl(x_{rj}-q_{rj}(\tau)\bigr).
\end{align*}
Here, the term $x_{rj}-q_{rj}(\tau)$ for $(r,j)\notin \mathcal{F}^{(\ell)}$ does not appear because omitting the corresponding constraint does not change the optimal solution.
Hence, the stationarity condition is
\begin{align}
x_{rj}(\tau)-\alpha_r(\tau)-\beta_j(\tau)
+\bm{1}_{(r,j)\in\mathcal{F}^{(\ell)}}\nu_{rj}(\tau)=0\quad
(r\in N,\,j\in M).
\label{eq:lagrangian_tau}
\end{align}
Take two distinct interior points $\tau',\tau''\in(\tau_\ell,\tau_{\ell+1})$. 
Since $X(\tau)$ is affine on $I_\ell$ by \Cref{lem:sp-ray-structure}, subtracting \eqref{eq:lagrangian_tau} at $\tau'$ and $\tau''$ 
and dividing by $\tau'-\tau''$ gives
\begin{align*}
x^{(\ell,1)}_{rj}
-\widehat\alpha_r-\widehat\beta_j
+\bm{1}_{(r,j)\in\mathcal{F}^{(\ell)}}\widehat\nu_{rj}=0\quad
(r\in N,\,j\in M),
\end{align*}
where the variables with hats are the corresponding difference quotients of the multipliers. 
Similarly, subtracting the active feasibility constraints gives
\begin{align*}
\sum_{j\in M}x^{(\ell,1)}_{rj}=0 \quad (r\in N),\qquad
\sum_{r\in N}x^{(\ell,1)}_{rj}=0 \quad (j\in M).
\end{align*}
For each $(r,j) \in \mathcal{F}^{(\ell)}$, since $x_{rj}(\tau)=q_{rj}(\tau) = p_{rj}+\tau d_{j} \cdot \bm{1}_{r=i}$ for all $\tau \in I_\ell$, we have
\begin{align*}
x^{(\ell,1)}_{rj}=
\begin{cases}
d_j & ((r,j)\in\mathcal F^{(\ell)},\, r=i),\\
0 & ((r,j)\in\mathcal F^{(\ell)},\, r\ne i).
\end{cases}
\end{align*}
These are exactly the KKT conditions for $({\rm T}_\ell)$, with
$(\widehat\alpha_{r})_{r\in N},(\widehat\beta_{j})_{j\in M},(\widehat\nu_{rj})_{r\in N,\,j\in M}$ as its multipliers. 
Therefore, $X^{(\ell,1)}$ is the unique optimizer of $({\rm T}_\ell)$.

\paragraph{Intervalwise slope nonnegativity.}
\label{subsubsec:sp-a3-interval-nonnegativity}

It remains to show that every interval slope in the piecewise-affine loss is nonnegative.
\begin{lemma}
\label{lem:sp-interval-slope-nonnegativity}
For every $\ell\in\{0,\dots,K\}$, it holds that $a^{(\ell)}\ge0$, where $a^{(\ell)}$ is defined as in \Cref{lem:sp-ray-structure}.
\end{lemma}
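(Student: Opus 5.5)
We argue by contradiction using the minimum-norm characterization of the slope matrix $Y\coloneqq X^{(\ell,1)}$ as the unique optimizer of $({\rm T}_\ell)$. Fix $\ell$ and partition $M$ according to the sign of $x_{ij}(\tau)-p_{ij}$ on $(\tau_\ell,\tau_{\ell+1})$: let $A^+$ be the objects where it is positive, $A^-$ where it is negative, and $A^0$ where it is identically zero. Then
\begin{align*}
a^{(\ell)}=\sum_{j\in A^+}y_{ij}-\sum_{j\in A^-}y_{ij}+\sum_{j\in A^0}|y_{ij}|,
\end{align*}
where for $j\in A^0$ we have $y_{ij}=0$, since $x_{ij}(\tau)$ is constant there. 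Using the row constraint $\sum_j y_{ij}=0$, this becomes $a^{(\ell)}=-2\sum_{j\in A^-}y_{ij}=2\sum_{j\in A^+}y_{ij}$. So suppose, for contradiction, that $\sum_{j\in A^-}y_{ij}>0$, i.e., agent $i$'s allocation on objects where she is under-served strictly increases.

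Next, we would view $Y$ as a circulation on the bipartite agent--object graph: an arc $r\to j$ carries weight $y_{rj}$ when $y_{rj}>0$, and an arc $j\to r$ carries weight $-y_{rj}$ when $y_{rj}<0$. Row and column sums vanish, so flow conservation holds. Since some $j_0\in A^-$ has $y_{ij_0}>0$, flow decomposition gives a directed cycle through the arc $i\to j_0$. This cycle returns to $i$ through some arc $j_1\to i$ with $y_{ij_1}<0$, and all its arcs are positive-flow arcs of $Y$.

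Let $C$ be the signed incidence matrix of this cycle. Subtracting $\varepsilon C$ from $Y$ keeps row and column sums zero and strictly decreases the squared norm, because every entry of the cycle moves toward zero. The contradiction follows once every equality constraint of $({\rm T}_\ell)$ is preserved, which requires the cycle to avoid frozen entries. Entries $(r,j)\in\mathcal F^{(\ell)}$ with $r\ne i$ have $y_{rj}=0$ and therefore carry no flow, so they are automatically excluded. The difficulty lies with frozen entries in row $i$, where $y_{ij}=d_j$ may be nonzero. I plan to choose the cycle so that it enters and leaves $i$ through non-frozen entries.

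The arc $i\to j_0$ is non-frozen: $x_{ij_0}(\tau)<p_{ij_0}$, and if $(i,j_0)$ were frozen then $x_{ij_0}=q_{ij_0}(\tau)$ with $j_0$ over-demanded. We then argue, using the KKT structure from \Cref{prop:opt-rep} (equivalently the multipliers $\widehat\alpha,\widehat\beta,\widehat\nu$ of $({\rm T}_\ell)$), that $i$ has enough non-frozen negative entries. Stationarity of $({\rm T}_\ell)$ gives $y_{rj}=\widehat\alpha_r+\widehat\beta_j$ on non-frozen entries. An alternative is to decompose only the part of $Y$ on non-frozen entries, treating frozen entries of row $i$ as external supply and demand $d_j$. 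Here we would use that $\bm q_i(\tau)$ moves along $\bm d$ with $\sum_j d_j=0$, and that frozen entries in row $i$ on $A^-$ have $q_{ij}(\tau)<p_{ij}$, i.e., $d_j<0$. This shows that the net non-frozen flow of row $i$ into $A^-$ still requires a non-frozen return arc.

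\textbf{Main obstacle.} The main obstacle is this bookkeeping of frozen entries in row $i$: we must show that a positive non-frozen flow into $A^-$ forces a non-frozen outgoing negative entry of $i$. My first attempt would be a sign analysis of $d_j$ on frozen entries, combined with the sign structure of $x_{ij}-p_{ij}$ versus $q_{ij}-p_{ij}$ on $A^+$ and $A^-$. Once such a cycle through non-frozen arcs exists, the norm-decrease argument gives the contradiction, and hence $a^{(\ell)}\ge0$.
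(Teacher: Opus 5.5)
Your setup is the paper's: the tangent program $({\rm T}_\ell)$, the circulation view of $Y=X^{(\ell,1)}$, and cancelling a cycle to lower the norm. However, the step you flag as the main obstacle is a genuine gap, and the cycle you propose need not work.

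You take an arbitrary cycle through an arc $i\to j_0$ with $j_0\in A^-$. Its return arc $j_1\to i$ may be a frozen entry. For any frozen $(i,j)$ we have $x_{ij}(\tau)-p_{ij}=\tau d_j$ on $I_\ell$. So the slope $y_{ij}=d_j$ has the same sign as the deviation $x_{ij}(\bar\tau_\ell)-p_{ij}=\bar\tau_\ell d_j$, since $\bar\tau_\ell>0$. Hence frozen entries of row $i$ in $A^-$ have $y_{ij}<0$: they are incoming arcs $j\to i$, and nothing stops your cycle from closing through one of them. Subtracting $\varepsilon$ times such a cycle violates the constraint $y_{ij}=d_j$ of $({\rm T}_\ell)$. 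Knowing that row $i$ has some non-frozen negative entry does not help, because that entry need not lie on the same cycle. (Such an entry does exist, since $\sum_{j\in A^+}y_{ij}<0$.)

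The missing idea is to choose the cycle by where it enters and leaves $i$. By the sign fact, frozen arcs at $i$ come in only two kinds: $i\to j$ with $j\in A^+$, and $j\to i$ with $j\in A^-$. Frozen entries in $A^0$ carry no flow. So any cycle that enters $i$ from some $j^+\in A^+$ and leaves $i$ to some $j^-\in A^-$ uses only non-frozen arcs at $i$.

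Such a cycle exists in every decomposition. Every simple cycle through $i$ enters from and leaves to $A^+\cup A^-$. Let $w_{+-}$ be the total weight of cycles entering $i$ from $A^+$ and leaving to $A^-$, and $w_{-+}$ the total weight of those entering from $A^-$ and leaving to $A^+$. Then $\sum_{j\in A^-}y_{ij}=w_{+-}-w_{-+}$, because cycles that both enter from and leave to $A^-$ cancel. Under your contradiction hypothesis this is positive, so $w_{+-}>0$. This is how the paper's proof proceeds, and with this choice your norm-decrease argument goes through unchanged.

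Your justification that $i\to j_0$ is non-frozen is also incomplete as written; the remark about $j_0$ being over-demanded gives no contradiction. It becomes complete with the same sign fact: if $(i,j_0)$ were frozen, then $y_{ij_0}=d_{j_0}<0$, contradicting $y_{ij_0}>0$.
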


\begin{proof}
Fix $\ell\in\{0,\dots,K\}$.
Define
\begin{align*}
M_+^{(\ell)}\coloneqq\{j\in M\mid x_{ij}^{(\ell,0)}-p_{ij}>0\}
\quad\text{and}\quad
M_-^{(\ell)}\coloneqq\{j\in M\mid  x_{ij}^{(\ell,0)}-p_{ij}<0\}.
\end{align*}
By construction of the partition in \Cref{lem:sp-ray-structure}, 
each affine function $g_j(\tau)\coloneqq x_{ij}(\tau)-p_{ij}$ is either 
strictly positive, strictly negative, or identically zero on $(\tau_\ell,\tau_{\ell+1})$.
Since $g_j(\bar{\tau}_\ell)=x_{ij}^{(\ell,0)}-p_{ij}$, 
the affine function $g_j(\tau)$ has the following sign behavior on the interval $(\tau_\ell,\tau_{\ell+1})$:
it is strictly positive if $j\in M_+^{(\ell)}$, 
strictly negative if $j\in M_-^{(\ell)}$, and 
identically zero otherwise.
Therefore, on this interval, 
\begin{align*}
a^{(\ell)}
&=\dv{\tau}\du(\bm{x}_i(\tau),\bm{p}_i)
=\dv{\tau}\sum_{j\in M}|x_{ij}(\tau)-p_{ij}|\\
&
=\sum_{j\in M_+^{(\ell)}}\dv{\tau} (x_{ij}(\tau)-p_{ij})
+\sum_{j\in M_-^{(\ell)}}\dv{\tau} (p_{ij}-x_{ij}(\tau))
=\sum_{j\in M_+^{(\ell)}}x_{ij}^{(\ell,1)}
-\sum_{j\in M_-^{(\ell)}}x_{ij}^{(\ell,1)}.
\end{align*}

Assume for contradiction that $a^{(\ell)}<0$.
Since $x_{ij}^{(\ell,1)}=0$ for $j\notin M_+^{(\ell)}\cup M_-^{(\ell)}$, the row-sum equality in $({\rm T}_\ell)$ implies that
$\sum_{j\in M_+^{(\ell)}}x_{ij}^{(\ell,1)}+\sum_{j\in M_-^{(\ell)}}x_{ij}^{(\ell,1)}=0$.
Hence, we have
\begin{align*}
\sum_{j\in M_+^{(\ell)}}x_{ij}^{(\ell,1)}=\frac12 a^{(\ell)}<0
\quad\text{and}\quad
\sum_{j\in M_-^{(\ell)}}x_{ij}^{(\ell,1)}=-\frac12 a^{(\ell)}>0.
\end{align*}

We next view $X^{(\ell,1)}$ as a circulation in the directed graph with vertex set $N\cup M$, defined as follows.
For each $(r,j)\in N\times M$ with $x_{rj}^{(\ell,1)}\neq 0$, introduce an arc oriented
$r\to j$ if $x_{rj}^{(\ell,1)}>0$ and $j\to r$ if $x_{rj}^{(\ell,1)}<0$, and assign it the
(positive) value $w_{rj}\coloneqq |x_{rj}^{(\ell,1)}|$.
The row- and column-sum equalities in $({\rm T}_\ell)$ imply flow conservation at
every vertex of this directed graph, and therefore $W$ is a circulation.
By the standard circulation decomposition theorem, $W$ can be written as a positive weighted sum of directed cycles
(see, e.g., \citet[Property~3.6]{AhujaMaOr93}).
Take any decomposition of the circulation~$W$ into simple directed cycles.
Since $\sum_{j\in M_+^{(\ell)}}x_{ij}^{(\ell,1)}<0$ and $\sum_{j\in M_-^{(\ell)}}x_{ij}^{(\ell,1)}>0$,
at least one cycle in the decomposition enters $i$ from some $j^+\in M_+^{(\ell)}$ and leaves~$i$ to some $j^-\in M_-^{(\ell)}$.
Fix such a directed cycle, and let $C\subseteq N\times M$ be the set of agent--object pairs $(r,j)$ such that $(r,j)$ or $(j,r)$ appears in the cycle.

We claim that $C\subseteq (N\times M)\setminus\mathcal F^{(\ell)}$.
For $(r,j)\in\mathcal F^{(\ell)}$ with $r\neq i$, no such edge can belong to~$C$, since $({\rm T}_\ell)$ forces $x_{rj}^{(\ell,1)}=0$ whereas every edge of $C$ has $x_{rj}^{(\ell,1)}\neq 0$.
Thus, it remains to consider edges incident to $i$.
Suppose $(i,j)\in\mathcal F^{(\ell)}$.
Then, $({\rm T}_\ell)$ forces $x_{ij}^{(\ell,1)}=d_j$. 
On the other hand, since $x_{ij}(\bar\tau_\ell)=x_{ij}^{(\ell,0)}$ by \eqref{eq:sp-ray-structure-1} and $x_{ij}(\bar\tau_\ell)=q_{ij}(\bar\tau_\ell)=p_{ij}+\bar\tau_\ell d_j$ by $(i,j)\in\mathcal F^{(\ell)}$, we obtain $x_{ij}^{(\ell,0)}-p_{ij}=\bar\tau_\ell d_j$.
Since $\bar\tau_\ell>0$, the quantities $x_{ij}^{(\ell,1)} = d_j$ and $x_{ij}^{(\ell,0)}-p_{ij}$ have the same sign.
If $x_{ij}^{(\ell,0)}-p_{ij}> 0$, then $j \in M_+^{(\ell)}$ and $j\neq j^-$, but we also have $j \neq j^+$ since $x_{ij}^{(\ell,1)} > 0$ while $x_{ij^+}^{(\ell,1)}<0$ (due to the arc $j^+ \to i$).
Similarly, if $x_{ij}^{(\ell,0)}-p_{ij} < 0$, then $j \not\in \{j^+, j^-\}$. 
Hence, $(i,j^-)\notin\mathcal F^{(\ell)}$ and $(i,j^+)\notin\mathcal F^{(\ell)}$, and therefore
$C\subseteq (N\times M)\setminus\mathcal F^{(\ell)}$.

Define the incidence matrix $V\in\{-1,0,1\}^{N\times M}$ of $C$ by
\begin{align*}
v_{rj}\coloneqq
\begin{cases}
1  & \text{if }(r,j)\in C\text{ and }(r,j)\text{ is oriented }r\to j,\\
-1 & \text{if }(r,j)\in C\text{ and }(r,j)\text{ is oriented }j\to r,\\
0  & \text{if }(r,j)\notin C.
\end{cases}
\end{align*}
Then, we have $\sum_{j\in M}v_{rj}=0$ for $r\in N$, $\sum_{r\in N}v_{rj}=0$ for $j\in M$, and
$v_{rj}x_{rj}^{(\ell,1)}=|x_{rj}^{(\ell,1)}|$ for $(r,j)\in C$.
In particular, $\sum_{r\in N}\sum_{j\in M}x_{rj}^{(\ell,1)}v_{rj}=\sum_{(r,j)\in C}|x_{rj}^{(\ell,1)}|>0$.

Define $\varepsilon\coloneqq \min\bigl\{|x_{rj}^{(\ell,1)}| \,\big|\, (r,j)\in C\bigr\}$,
which is strictly positive.
Set $Y\coloneqq X^{(\ell,1)}-\varepsilon V$.
Because $C\subseteq(N\times M)\setminus\mathcal F^{(\ell)}$, no frozen equality constraint in
$({\rm T}_\ell)$ applies to edges of $C$, and hence $y_{rj}=x_{rj}^{(\ell,1)}$
holds for every $(r,j)\in\mathcal F^{(\ell)}$.
Since $C$ is a directed cycle, $V$ satisfies the row- and column-sum equalities in
$({\rm T}_\ell)$, and therefore $Y$ is feasible in $({\rm T}_\ell)$.

We now compare objective values in $({\rm T}_\ell)$.
Since $v_{rj}=0$ outside $C$, we have
\begin{align*}
\frac12\sum_{r\in N}\sum_{j\in M}y_{rj}^2
&=\frac12\sum_{r\in N}\sum_{j\in M}\left(x_{rj}^{(\ell,1)}-\varepsilon v_{rj}\right)^2
=\frac12\sum_{r\in N}\sum_{j\in M}\left(x_{rj}^{(\ell,1)}\right)^2
-\varepsilon\sum_{(r,j)\in C}|x_{rj}^{(\ell,1)}|
+\frac{\varepsilon^2}{2}\sum_{(r,j)\in C}v_{rj}^2.
\end{align*}
Because $v_{rj}^2=1$ for $(r,j)\in C$, the last sum equals $|C|$, the number of edges of $C$.
Moreover, by the definition of $\varepsilon$, we have
$\sum_{(r,j)\in C}|x_{rj}^{(\ell,1)}|\ge \varepsilon|C|$.
Consequently,
\begin{align*}
\frac12\sum_{r\in N}\sum_{j\in M}y_{rj}^2
\le
\frac12\sum_{r\in N}\sum_{j\in M}\left(x_{rj}^{(\ell,1)}\right)^2
-\frac{\varepsilon^2}{2}|C|
<
\frac12\sum_{r\in N}\sum_{j\in M}\left(x_{rj}^{(\ell,1)}\right)^2.
\end{align*}
This contradicts the optimality of $X^{(\ell,1)}$ in $({\rm T}_\ell)$.
It follows that $a^{(\ell)}\ge0$.
\end{proof}

Finally, we complete the proof of \Cref{prop:qp-sp}.

\begin{proof}[Proof of \Cref{prop:qp-sp}]
Fix $(P,i)$ and a report $\bm{q}_i\in\Delta$, and set $\bm{d}=\bm{q}_i-\bm{p}_i$.
If $\bm{d}=\bm{0}$, then $f(1)=f(0)$.
Else, by \Cref{lem:sp-ray-structure}, $f$ is continuous and piecewise affine on $[0,1]$.
Hence,
\begin{align*}
f(1)-f(0)=\sum_{\ell=0}^{K}a^{(\ell)}(\tau_{\ell+1}-\tau_\ell).
\end{align*}
By \Cref{lem:sp-interval-slope-nonnegativity}, each $a^{(\ell)}$ is nonnegative, and each interval length $\tau_{\ell+1}-\tau_\ell$ is also nonnegative.
Therefore, $f(1)\ge f(0)$, which means that $F_{P,i}(\bm{q}_i)\ge F_{P,i}(\bm{p}_i)$, as desired.
\end{proof}

\end{document}